\documentclass[11pt,letterpaper]{article}
\usepackage[T1]{fontenc}
\usepackage{amsmath,amsfonts,amssymb,amsthm}
\usepackage{mathtools}
\usepackage{bbm}
\usepackage{bm}
\usepackage[margin=1.00in]{geometry}
\usepackage{natbib}
\usepackage{hyperref}
\hypersetup{breaklinks=true,colorlinks=true,linkcolor=blue,citecolor=blue,urlcolor=blue}
\usepackage{graphicx}
\usepackage{authblk}
\usepackage{tikz}
\usepackage{pgfplots}
\pgfplotsset{compat=1.18}
\usepgfplotslibrary{groupplots}

\newtheorem{theorem}{Theorem}
\newtheorem{proposition}{Proposition}
\newtheorem{corollary}{Corollary}

\theoremstyle{definition}
\newtheorem{definition}{Definition}
\newtheorem{assumption}{Assumption}
\newtheorem{example}{Example}
\newtheorem{remark}{Remark}

\newcommand{\p}{\mathbb{P}}
\newcommand{\E}{\mathbb{E}}
\newcommand{\R}{\mathbb{R}}
\newcommand{\id}{\mathbbm{1}}

\newcommand{\cx}{\le_{\mathrm{cx}}}
\newcommand{\ES}{\mathrm{ES}}
\newcommand{\VaR}{\mathrm{VaR}}
\newcommand{\Cset}{\mathbb{C}}
\newcommand{\dsquare}{\mathop{\square}\displaylimits}
\newcommand{\dboxplus}{\mathop{\boxplus}\displaylimits}
\newcommand{\dsquareC}{\mathop{\kern.4em\square\mathrlap{_{\Cset}}\kern.4em}\displaylimits}
\newcommand{\dboxplusC}{\mathop{\kern.4em\boxplus\mathrlap{_{\Cset}}\kern.4em}\displaylimits}
\newcommand{\bX}{\boldsymbol{X}}
\newcommand{\bY}{\boldsymbol{Y}}

\renewcommand{\d}{\mathrm{d}}

\DeclareMathOperator*{\esssup}{ess\,sup}

\title{Comonotonic improvement under feasibility constraints}

\author[1]{Christopher Blier-Wong}
\author[2]{Jean-Gabriel Lauzier}

\affil[1]{Department of Statistical Sciences, University of Toronto, Canada}
\affil[2]{Department of Economics, Memorial University of Newfoundland, Canada}

\date{\today}

\begin{document}
\maketitle

\begin{abstract}
Regulatory and contractual constraints on individual
exposures are standard in insurance and reinsurance
markets, but a poorly designed constraint can distort
the economic incentives of risk-averse agents.
In the unconstrained problem, the classical
comonotonic improvement theorem guarantees
Pareto-optimal allocations that are nondecreasing in
the aggregate loss.
A constraint that is not stable under risk reduction
can destroy this property. We show by example that
Value-at-Risk caps lead to optimal allocations that
are non-comonotonic in the aggregate loss.
We identify componentwise convex-order solidity
as a sufficient condition on the feasible set that
restores the comonotonic improvement under
constraints. If replacing any agent's allocation by
a less risky one preserves feasibility, then every
feasible allocation admits a feasible comonotonic
improvement for all convex-order-consistent
preferences.
This criterion covers many constraints typical in
risk management, but excludes Value-at-Risk caps and
idiosyncratic deductibles. We illustrate the implications
of our main result in a mean-variance risk-sharing
application.
\end{abstract}

\noindent \textbf{Keywords:} Risk-sharing, regulatory constraints, convex order, Pareto-optimal allocations, mean-variance risk-sharing

%% =========================================================
\section{Introduction}\label{sec:intro}

Under convex-order-consistent preferences, the classical comonotonic
improvement theorem states that every Pareto-optimal allocation of
an aggregate loss $S$ among $n$ agents admits a comonotonic
representative with the same marginal distributions. The allocation problem
therefore collapses to a one-dimensional rearrangement of $S$. The
result was established for expected-utility agents by
\citet{borch1962equilibrium}, extended preference-free by
\citet{landsberger1994comonotonic}, and sharpened to integrable
aggregate losses on atomless spaces by
\citet{ludkovski2008comonotonicity}. In parallel, the
inf-convolution tradition of \citet{barrieu2005inf},
\citet{jouini2008optimal}, \citet{filipovic2008optimal},
\citet{acciaio2007optimal}, and \citet{ruschendorf2013mathematical}
recasts the problem on spaces of monetary risk measures, and
identifies convex-order consistency as the structural feature that
forces optimal allocations to be comonotonic. 
The preference class it covers is broad. It
contains every law-invariant coherent risk measure
\citep{bauerle2006stochastic}, every law-invariant convex risk
measure on an atomless space \citep{jouini2008optimal}, and every
distortion risk measure with concave distortion
\citep{wang1997axiomatic}, a family that includes Expected
Shortfall, spectral risk measures, proportional hazard transforms,
and Wang transforms with a concave generator. 
A parallel to the comonotonic improvement theorem is well-known 
in general equilibrium theory under a different label: 
in equilibrium, each risk-averse agent's consumption bundle is 
nondecreasing in the aggregate endowment, so that Pareto-optimal 
allocations are comonotonic \citep{mas1995microeconomic}.
Comonotonic reduction
is the starting point of most optimal risk-sharing analyses.

In practical situations, risk-sharing problems are rarely unconstrained.
Capital regulation imposes ceilings on each entity's exposure based
on its own capital, assets, and reserves. Solvency II
\citep{ec2009solvency} and the Swiss Solvency Test cap one-year
changes in own funds through a high-level $\VaR$ and $\ES$
respectively. The revised Basel III market-risk framework
\citep{bcbs2019market} replaces the earlier $\VaR$ ceiling with an
$\ES_{0.975}$ at the trading-desk level. 
Reinsurance treaties impose capacity limits, quota-share
ratios, and excess-of-loss layer caps
\citep{albrecher2017reinsurance}. Catastrophe bonds and
insurance-linked securities introduce attachment points, exhaustion
points, and step-up capacities indexed to the aggregate loss
\citep{cummins2008cat}. Peer-to-peer pools can cede an upper tail to
a reinsurer \citep{denuit2020investing}, and every sharing agreement
must satisfy individual rationality, so that each agent prefers the
pool to walking away \citep{jouini2008optimal}. Each such
arrangement carves out a strict subset of the clearing allocations
and raises the question of whether comonotonic reduction still
applies.

We seek a condition on the feasible set that preserves
comonotonic reduction under constraints, and identify it as
componentwise convex-order solidity, the property that
every clearing allocation whose components are convex-order
dominated by those of a feasible allocation is itself
feasible. Informally, making any one agent's
share less risky cannot push the allocation out of the feasible
set. Once the feasible set is solid, the classical comonotonic
improvement theorem carries over with essentially the same proof;
Theorem~\ref{thm:cc:constrained}, our constrained comonotonic
improvement theorem, reduces to the unconstrained one in a single
step. Isolating this property is the substantive step, and it forces the
constrained problem back into the geometry where the classical
argument applies.

The definition is substantive because it excludes constraints
commonly written into regulation and contracts, and this has direct
consequences for the structure of the pool and for the incentives
facing its participants. Solidity fails in two qualitatively
distinct ways. The first arises when feasibility depends on
information beyond the aggregate loss. If each agent retains a
deductible on its own portfolio, feasibility is driven by
each agent's own portfolio loss, whereas the components of a comonotonic
allocation are $\sigma(S)$-measurable. 
Optimal allocations then cannot be functions of $S$ alone, and
must condition on each agent's individual loss. 
The second mode arises even when the constraint is
$\sigma(S)$-measurable. A $\VaR$ ceiling at a fixed level is the
canonical example. $\VaR$ is not convex-order consistent, so a firm
under such a cap can lower its requirement by splitting its
portfolio across subentities and shifting mass into the tail that
$\VaR$ does not see. \citet{weber2018solvency} shows that under
Solvency II a firm can shrink its $\VaR$ capital requirement
almost to zero by splitting into enough subentities. 
This mechanism echoes observations in
\citet{xia2023infconvolution} and resonates with the
counter-monotonic quantile-based optima of
\citet{embrechts2018quantilebased} and
\citet{lauzier2023pairwise, lauzier2026risk}.
\citet{bernard2009optimal} show that an insurer subject to a $\VaR$
regulatory ceiling optimally purchases reinsurance for moderate
losses but leaves large losses entirely uninsured, since coverage
above the quantile does not affect the binding constraint. This
tail-insensitivity of $\VaR$ runs directly counter to the purpose
of the regulation, and reflects the same mechanism that breaks
convex-order solidity in our setting. The regulatory
incentive therefore, points in the opposite direction from a cap based on a
convex-order-consistent risk measure: instead of rewarding each
agent for flattening the tail of its share, a $\VaR$ cap rewards
pushing mass further out. Expected
Shortfall at the same level dominates $\VaR$ and is convex-order
consistent, so replacing the cap with an $\ES$ cap at the same level
restores solidity. To the best of our knowledge,
neither the loss of $\sigma(S)$-measurability induced by
own-endowment deductibles, nor the breakdown of comonotonic
reduction under a $\VaR$ ceiling with risk-averse decision criteria,
has been isolated in the prior risk-sharing literature.

On the positive side, solidity admits a broad class of operational
constraints: deterministic caps and floors, premium budgets,
coherent risk-measure ceilings such as Expected Shortfall, and
individual rationality. Solidity is preserved under intersection,
so combinations of these are again solid, and the constrained
problem admits the same comonotonic reduction as the unconstrained
one. A complementary thread studies constrained bilateral sharing without
a common probability measure. \citet{boonen2020bilateral} consider a
decision maker with subjective expected utility facing a
rank-dependent counterparty in a setting with no aggregate loss to
share, impose deterministic exposure caps, and obtain optima that are
monotone in the likelihood ratio rather than comonotonic. Their
failure of comonotonicity is preference-side, driven by heterogeneous
beliefs and a nonconvex distortion, whereas our non-comonotonic
examples come from the constraint side under shared
convex-order-consistent preferences. We fix convex-order-consistent
preferences and identify the structural property on the feasible set
that preserves the comonotonic reduction.

As a concrete illustration, we revisit the mean-variance
risk-sharing problem. In the unconstrained case, the optimum is
proportional, with each agent's share of $S$ tied to that agent's
risk tolerance \citep{acciaio2007optimal}. Under deterministic
capacity caps, which are solid, the optimum remains proportional,
but only in pieces. Once an agent saturates its capacity, the
remaining agents absorb the residual loss in proportion to their
own risk tolerances. The sharing rule transitions to a new linear
regime at each saturation point, and the truncated-affine form shows
how a solid constraint reshapes the pool without destroying the
comonotonic structure. Replacing the cap with a $\VaR$ ceiling at
the same level breaks solidity, and we exhibit by example an
optimal allocation whose components fail to be comonotonic with
$S$.

The remainder of this paper is structured as follows.
Section~\ref{sec:setup} fixes the setup, introduces the notation,
and recalls the unconstrained comonotonic improvement theorem.
Section~\ref{sec:main} defines componentwise convex-order solidity
and proves the constrained comonotonic improvement theorem.
Section~\ref{sec:counter} exhibits three classes of constraints
that violate solidity, each yielding an optimal allocation whose
components may not be comonotonic with $S$.
Section~\ref{sec:sufficient} verifies solidity for the constraint
families typically used in practice and discusses a funding-cost
adjustment on the objective side. Section~\ref{sec:mv} applies the 
constrained reduction to the mean-variance problem.

%% =========================================================
\section{Setup and preliminaries}\label{sec:setup}

Let $(\Omega,\mathcal{F},\p)$ be an atomless probability space. Each
agent $i\in[n]:=\{1,\dots,n\}$ is endowed with a standalone loss
$\zeta_i\in L^1$, and the aggregate loss shared by the pool is
$S:=\sum_{i=1}^n\zeta_i\in L^1$. Each agent evaluates losses through a
law-invariant risk measure $\rho_i:L^1\to\R\cup\{+\infty\}$ consistent with
risk aversion, in the loss convention where higher values of $\rho_i(X_i)$
correspond to worse loss distributions. The unconstrained risk-sharing
problem is
\begin{equation}\label{eq:intro:unconstrained_problem}
\inf_{\bX\in\mathbb{A}_n(S)}\sum_{i=1}^n \rho_i(X_i); 
\qquad
\mathbb{A}_n(S):=\left\{(X_1,\dots,X_n)\in (L^1)^n:\ \sum_{i=1}^n X_i=S\ \text{a.s.}\right\}.
\end{equation}
The feasible set $\mathbb{A}_n(S)$ collects every clearing allocation of
$S$. Under mild regularity conditions, a minimizer of
\eqref{eq:intro:unconstrained_problem} is Pareto optimal up to
deterministic cash transfers between the agents.

An allocation $\bX\in\mathbb{A}_n(S)$ is comonotonic with the
aggregate loss~$S$ if there exist nondecreasing functions
$f_1,\dots,f_n:\R\to\R$ with $X_i=f_i(S)$ a.s.\ for every $i$ and
$\sum_{i=1}^n f_i(s)=s$ for $F_S$-a.e.\ $s$ (equivalently,
$\sum_{i=1}^n f_i(S)=S$ a.s.); see \citet[Proposition 4.5]{denneberg1994nonadditive}.
We denote
the set of comonotonic clearing allocations by
\[
\mathbb{A}_n^{+}(S):=\left\{\bX\in\mathbb{A}_n(S):\ X_i=f_i(S)\ \text{a.s.\ for nondecreasing } f_1,\dots,f_n:\R\to\R\right\}.
\]
For any constraint system $\Cset$, we define
$\mathbb{A}_n^{\Cset}(S):=\{\bX\in\mathbb{A}_n(S):\bX$ satisfies every
constraint in $\Cset\}$. Let
\[
\dsquareC_{i=1}^n\rho_i(S)
\ :=\ \inf_{\bX\in\mathbb{A}_n^{\Cset}(S)}\sum_{i=1}^n\rho_i(X_i)
\quad\text{and}\quad
\dboxplusC_{i=1}^n\rho_i(S)
\ :=\ \inf_{\bX\in\mathbb{A}_n^{\Cset}(S)\cap\mathbb{A}_n^{+}(S)}\sum_{i=1}^n\rho_i(X_i)
\]
denote the constrained inf-convolution and its comonotonic
restriction, respectively. The unconstrained symbols
$\dsquare_{i=1}^n\rho_i(S)$ and $\dboxplus_{i=1}^n\rho_i(S)$ are
obtained by removing the constraint system $\Cset$. The unconstrained
reduction $\dsquare_{i=1}^n\rho_i(S)=\dboxplus_{i=1}^n\rho_i(S)$,
recalled in Theorem~\ref{thm:unconstrained}, is what we seek to extend
to the constrained setting. The central equality we study is whether
\begin{equation}\label{eq:intro:central_equality}
\dsquareC_{i=1}^n\rho_i(S)
\;=\;
\dboxplusC_{i=1}^n\rho_i(S),
\end{equation}
where the right-hand side is the same infimum restricted to
$\mathbb{A}_n^{\Cset}(S)\cap\mathbb{A}_n^{+}(S)$.

\begin{definition}[Pareto optimality]\label{def:pareto}
Given a feasible set $\mathcal{X}\subseteq\mathbb{A}_n(S)$, an
allocation $\bX^{\star}\in\mathcal{X}$ is \emph{Pareto optimal in
$\mathcal{X}$} if no $\bY\in\mathcal{X}$ satisfies
$\rho_i(Y_i)\le\rho_i(X^{\star}_i)$ for every $i\in[n]$ with at least
one strict inequality.
\end{definition}

Any minimizer of $\sum_{i=1}^n\rho_i(X_i)$ over $\mathcal{X}$ is Pareto
optimal in $\mathcal{X}$.

\subsection{Convex order}

The convex order formalizes the notion that one random variable is less
dispersed than another while preserving the mean.

\begin{definition}[Convex order]\label{def:cx}
For $X,Y\in L^1$, we write $Y\cx X$ if $\E[Y]=\E[X]$ and
$\E[\phi(Y)]\le\E[\phi(X)]$ for every convex $\phi:\R\to\R$.
\end{definition}

Equivalently, $Y\cx X$ if and only if $\E[Y]=\E[X]$ and
$\int_p^1 F^{-1}_Y(u)\,\d u\le\int_p^1 F^{-1}_X(u)\,\d u$ for every
$p\in[0,1]$. In the loss convention, $Y\cx X$ says $Y$ is a
mean-preserving contraction of $X$ in the sense of
\citet{rothschild1970increasing}; equivalently, in the wealth
convention, $-Y$ second-order stochastically dominates $-X$ with equal
means. Every risk-averse agent with finite mean prefers $Y$ to $X$.

\begin{assumption}[Convex-order-consistent preferences]\label{ass:cx_monotone}
Each agent's risk measure $\rho_i:L^1\to\R\cup\{+\infty\}$ is
convex-order consistent: $Y\cx X$ implies $\rho_i(Y)\le\rho_i(X)$.
\end{assumption}

Throughout we use the extended-arithmetic conventions
$a+(+\infty)=+\infty$ for $a\in\R\cup\{+\infty\}$, and we write
$\sum_{i=1}^n\rho_i(X_i)=+\infty$ as soon as $\rho_i(X_i)=+\infty$ for
some~$i$. Infima over empty sets are $+\infty$.

Assumption~\ref{ass:cx_monotone} covers expected
concave-utility functionals in the loss convention, i.e.\
$X\mapsto\E[\ell(X)]$ with $\ell$ convex and nondecreasing (equivalently
$X\mapsto-\E[u(-X)]$ for concave $u$, where $-X$ is the wealth
representation); it also covers law-invariant coherent risk measures
\citep{bauerle2006stochastic}, distortion risk measures with concave
distortion \citep{wang1997axiomatic}, and law-invariant convex monetary
risk measures on atomless probability spaces \citep{jouini2008optimal},
under the standard regularity conditions on the underlying function
space that render these classes convex-order consistent.

Throughout the paper, $\VaR_\alpha$ denotes the lower
$\alpha$-quantile,
$\VaR_\alpha(X):=\inf\{x\in\R:\p(X\le x)\ge\alpha\}$, for
$\alpha\in(0,1)$.

\subsection{Unconstrained comonotonic improvement}

The following is the comonotonic improvement theorem of
\citet{ludkovski2008comonotonicity}, which extends to integrable
aggregate losses the construction of
\citet{landsberger1994comonotonic} for the bounded discrete case.
The original statements are phrased in terms of a stop-loss order;
for $L^1$ aggregate losses with equal means, this coincides with
the convex order used here.

\begin{theorem}\label{thm:unconstrained}
Let $(\Omega,\mathcal{F},\p)$ be atomless and $S\in L^1$. For every
$\bX=(X_1,\dots,X_n)\in\mathbb{A}_n(S)$ there exists
$\bar{\bX}=(\bar X_1,\dots,\bar X_n)\in\mathbb{A}_n^{+}(S)$ such that
$\bar X_i\cx X_i$ for every $i\in[n]$.
\end{theorem}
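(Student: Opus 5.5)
The plan is to first prove the result for finitely-valued $S$ and then pass to general $S\in L^1$ by approximation and uniform integrability. The comparison tool throughout is the conditional Jensen inequality, because $\E[X_i\mid S]\cx X_i$ for every $X_i\in L^1$. Taking conditional expectations on $\sigma(S)$ therefore produces an allocation $Y_i:=\E[X_i\mid S]=g_i(S)$ that clears and dominates $\bX$ componentwise in convex order. What it does not produce is monotonicity: the functions $g_i$ need not be nondecreasing. So the whole task is to repair monotonicity of functions of $S$ while preserving the clearing identity $\sum_i g_i(s)=s$ and not increasing risk in convex order.

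\emph{Step 1 (discrete case).} Let $S$ take values $s_1<\dots<s_m$ with probabilities $p_1,\dots,p_m$, so each $Y_i$ is a vector $(g_i(s_k))_k$. I would follow the Landsberger--Meilijson construction. If some $g_i$ decreases between consecutive levels $s_k<s_{k+1}$, then, because $\sum_i g_i$ is strictly increasing, some other $g_j$ must increase there. On the two atoms $\{S=s_k\}\cup\{S=s_{k+1}\}$, I transfer an amount $\varepsilon$ from agent $j$ to agent $i$ on one atom and the probability-weighted compensating amount on the other. This keeps each agent's mean fixed and keeps clearing exact. For agent $i$ the values on the two atoms move toward each other, which is a mean-preserving contraction and hence convex-order improving. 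For agent $j$ the transfer also shrinks its spread, provided $\varepsilon$ is chosen so that no values cross. Since the two agents cross in opposite directions, the same transfer contracts both.

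Iterating this operation, I show the procedure terminates, or else converges by compactness in $\R^{nm}$. The convex order is closed under convergence of finite-support distributions, so the limit inherits the domination. For termination I would use a potential function such as $\sum_i\sum_k p_k g_i(s_k)^2$, which strictly decreases at each step. The minimizer of this potential over the compact convex set of clearing vectors that convex-order dominate $\bX$ must be comonotone, since otherwise another transfer would lower it. This minimizer argument is the cleanest route, and I would use it rather than an explicit iteration.

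\emph{Step 2 (general $S$).} Approximate $S$ by discretizations $S^{(m)}=\E[S\mid \mathcal{G}_m]$, where $\mathcal{G}_m$ is generated by finer and finer partitions of the range of $S$, with $S^{(m)}\to S$ in $L^1$. Set $X_i^{(m)}=\E[X_i\mid\mathcal{G}_m]$. Then $\bX^{(m)}$ clears $S^{(m)}$ and $X_i^{(m)}\cx X_i$. Step 1 gives nondecreasing $f_i^{(m)}$ with $\sum_i f_i^{(m)}=\mathrm{id}$ on the range of $S^{(m)}$ and $f_i^{(m)}(S^{(m)})\cx X_i$. One can take the $f_i^{(m)}$ to be continuous and piecewise linear with $\sum_i f_i^{(m)}(s)=s$ on all of $\R$. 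Since each $f_i^{(m)}$ is nondecreasing and they sum to the identity, they are $1$-Lipschitz. Normalizing via the means bounds them, so Arzel\`a--Ascoli yields a subsequence converging locally uniformly to nondecreasing $1$-Lipschitz $f_i$ with $\sum_i f_i(s)=s$.

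Then $f_i^{(m)}(S^{(m)})\to f_i(S)$ in $L^1$: the $1$-Lipschitz property controls the error from replacing $S^{(m)}$ by $S$, and dominated convergence handles the rest. The convex order is preserved under $L^1$ limits, because $\E[(\cdot-t)_+]$ is $L^1$-continuous and means converge. Hence $\bar X_i:=f_i(S)\cx X_i$ and $\bar\bX\in\mathbb{A}_n^+(S)$.

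The main obstacle is Step 1: checking that the pairwise transfer simultaneously contracts both agents, or equivalently that a non-comonotone minimizer of the quadratic potential can be strictly improved. I expect to settle it by this two-atom perturbation argument, using $\sum_i g_i(s_{k+1})-\sum_i g_i(s_k)>0$ to guarantee that an oppositely moving partner always exists.
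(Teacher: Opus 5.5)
Your proposal is correct and follows essentially the same route as the result the paper cites rather than proves: the Landsberger--Meilijson pairwise-transfer argument for finitely-valued $S$, extended to integrable $S$ as in Ludkovski--R\"uschendorf. That extension uses conditional-expectation discretization, compactness of the nondecreasing sharing rules (which are $1$-Lipschitz because they sum to the identity), and $L^1$-stability of stop-loss transforms. Your strictly convex potential, minimized over the compact convex set of $\sigma(S)$-measurable clearing vectors lying \emph{below} $\bX$ in convex order (not above, as you wrote), is a clean way to run the discrete step; note also that nothing in your argument uses atomlessness, since every object you construct lives on $\sigma(S)$.
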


Comonotonicity also has a desirable incentive interpretation,
sometimes called the no-sabotage property.
If $\bX\in\mathbb{A}_n^{+}(S)$,
so each share is a nondecreasing function of the aggregate loss
$X_i=f_i(S)$, then an agent cannot reduce its own allocation by
secretly inflating its own contribution to $S$, since a larger $S$
weakly increases every $X_i$. See \citet{carlier2003pareto} for
further discussion of this property and its economic relevance.
Non-comonotonic allocations generally
lack this stability property.

\begin{remark}\label{rem:two_features}
Two features of the comonotonic improvement are relevant to Section~\ref{sec:main}.
First, it is componentwise: $\bar X_i\cx X_i$ holds separately for
each $i$, with no cross-component comparison. Second, it is
law-based: $\bar X_i\cx X_i$ compares the marginal distributions of
$\bar X_i$ and $X_i$ alone and imposes no restriction on their
pathwise coupling. Both features are preserved by feasibility
constraints expressed through componentwise convex-order-consistent
functionals of each $X_i$.
\end{remark}

%% =========================================================
\section{The constrained comonotonic improvement theorem}\label{sec:main}

This section identifies the structural condition on the feasible set
under which the unconstrained comonotonic improvement theorem remains
valid. We first introduce componentwise convex-order solidity as a
property of the constrained feasible set, and then state and prove the
constrained comonotonic improvement theorem.

\subsection{Componentwise convex-order solidity}\label{sec:solidity}

The comonotonic improvement in Theorem~\ref{thm:unconstrained}
replaces each component $X_i$ by a convex-order smaller $\bar X_i$.
What a feasible set must tolerate for the comonotonic improvement
to stay inside it is therefore exactly this operation: any componentwise
convex-order reduction of a feasible allocation should remain
feasible.

\begin{definition}[Componentwise convex-order solid feasible set]\label{def:cc:solid}
A feasible set $\mathcal{X}\subseteq\mathbb{A}_n(S)$ is \emph{componentwise
convex-order solid} if, for every $\bX\in\mathcal{X}$ and every
$\bY\in\mathbb{A}_n(S)$,
\[
Y_i\cx X_i\ \text{for every } i\in[n]\quad\Longrightarrow\quad \bY\in\mathcal{X}.
\]
\end{definition}

Solidity ensures that the comonotonic rearrangement of
Theorem~\ref{thm:unconstrained}, which replaces each component by a
convex-order smaller one, cannot leave the feasible set. The
condition is componentwise in convex order, but joint through the
clearing constraint $\bY\in\mathbb{A}_n(S)$, and it is preference-free,
in that the reduction applies to every convex-order-consistent
preference family once $\mathcal{X}$ is solid. Solidity is also
preserved under arbitrary intersections, by direct verification of
Definition~\ref{def:cc:solid}.
Marginal constraints are compatible only when their admissible marginal
sets are downward closed in convex order; in particular, ceilings based
on convex-order-consistent law-invariant functionals are compatible,
whereas law-invariance alone is not enough, with Value-at-Risk ceilings
in Section~\ref{sec:counter} as the canonical counterexample. By
contrast, constraints coupling~$X_i$ to variables outside~$\sigma(S)$
are typically not compatible.
Section~\ref{sec:counter} exhibits three failure modes and
Section~\ref{sec:sufficient} catalogues the admissible families.

\subsection{The constrained comonotonic improvement theorem}

The main result of this paper shows that componentwise convex-order
solidity is sufficient for the comonotonic improvement in
Theorem~\ref{thm:unconstrained} to remain within the feasible set.

\begin{theorem}\label{thm:cc:constrained}
Let $(\Omega,\mathcal{F},\p)$ be atomless, let $S\in L^1$, and let
$\mathcal{X}\subseteq\mathbb{A}_n(S)$ be nonempty and componentwise
convex-order solid. Then:
\begin{enumerate}
\item[(i)] for every $\bX\in\mathcal{X}$ there exists a comonotonic
allocation $\bar{\bX}\in\mathcal{X}\cap\mathbb{A}_n^{+}(S)$ with
$\bar X_i\cx X_i$ for every $i\in[n]$;
\item[(ii)] if $\rho_1,\dots,\rho_n$ satisfy
Assumption~\ref{ass:cx_monotone}, then
\begin{equation}\label{eq:cc:thm:inf}
\inf_{\bX\in\mathcal{X}}\sum_{i=1}^n\rho_i(X_i)
\ =\
\inf_{\bX\in\mathcal{X}\cap\mathbb{A}_n^{+}(S)}\sum_{i=1}^n\rho_i(X_i);
\end{equation}
\item[(iii)] under Assumption~\ref{ass:cx_monotone}, if
$\bX^{\star}\in\mathcal{X}$ is Pareto optimal, then there exists
$\bar{\bX}^{\star}\in\mathcal{X}\cap\mathbb{A}_n^{+}(S)$ with
$\rho_i(\bar X^{\star}_i)=\rho_i(X^{\star}_i)$ for every $i$.
\end{enumerate}
\end{theorem}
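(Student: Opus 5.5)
The plan is to obtain all three parts from Theorem~\ref{thm:unconstrained} plus the definition of solidity. No new construction is needed.

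For (i), fix $\bX\in\mathcal{X}\subseteq\mathbb{A}_n(S)$. Theorem~\ref{thm:unconstrained} applies since the space is atomless and $S\in L^1$. It yields $\bar{\bX}\in\mathbb{A}_n^{+}(S)\subseteq\mathbb{A}_n(S)$ with $\bar X_i\cx X_i$ for every $i$. Definition~\ref{def:cc:solid}, applied with $\bY=\bar{\bX}$, then gives $\bar{\bX}\in\mathcal{X}$. Hence $\bar{\bX}\in\mathcal{X}\cap\mathbb{A}_n^{+}(S)$. The only point to check is that the improvement from Theorem~\ref{thm:unconstrained} is a genuine clearing allocation in $(L^1)^n$, which is part of its conclusion.

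For (ii), the inequality ``$\le$'' in \eqref{eq:cc:thm:inf} is immediate, since the right-hand infimum runs over a subset of $\mathcal{X}$. For ``$\ge$'', take any $\bX\in\mathcal{X}$ and let $\bar{\bX}$ be as in (i). Assumption~\ref{ass:cx_monotone} gives $\rho_i(\bar X_i)\le\rho_i(X_i)$ for each $i$. Summing in the extended arithmetic, where a $+\infty$ on the left forces $+\infty$ on the right, gives
\[\sum_i\rho_i(\bar X_i)\le\sum_i\rho_i(X_i).\]
So the right-hand infimum is at most $\sum_i\rho_i(X_i)$, and taking the infimum over $\bX$ finishes (ii). Nonemptiness of $\mathcal{X}$ ensures both sides are infima over nonempty sets, though the equality would also hold trivially otherwise.

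For (iii), let $\bX^\star\in\mathcal{X}$ be Pareto optimal and let $\bar{\bX}^\star$ be its improvement from (i). As in (ii), $\rho_i(\bar X^\star_i)\le\rho_i(X^\star_i)$ for every $i$, and $\bar{\bX}^\star\in\mathcal{X}$. If some inequality were strict, $\bar{\bX}^\star$ would Pareto dominate $\bX^\star$ within $\mathcal{X}$, contradicting Definition~\ref{def:pareto}. Hence all the inequalities are equalities.

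There is no substantive obstacle. The main care point is (i): the solidity definition quantifies over $\bY\in\mathbb{A}_n(S)$, so we must confirm that Theorem~\ref{thm:unconstrained}'s output lies there. The rest is bookkeeping with the $+\infty$ conventions.
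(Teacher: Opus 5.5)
Your proposal is correct and matches the paper's proof step for step. Part (i) uses Theorem~\ref{thm:unconstrained} followed by solidity; part (ii) uses the subset inclusion for one inequality and (i) with Assumption~\ref{ass:cx\_monotone} for the other; part (iii) applies (i) to $\bX^\star$ and lets Pareto optimality rule out any strict inequality.
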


\begin{proof}
Fix $\bX\in\mathcal{X}$. By Theorem~\ref{thm:unconstrained}, there is
$\bar{\bX}\in\mathbb{A}_n^{+}(S)$ with $\bar X_i\cx X_i$ for every~$i$.
Componentwise convex-order solidity of $\mathcal{X}$ gives
$\bar{\bX}\in\mathcal{X}$, proving~(i).

For (ii), the inequality ``$\le$'' follows from
$\mathcal{X}\cap\mathbb{A}_n^{+}(S)\subseteq\mathcal{X}$. The reverse
inequality follows from (i) and Assumption~\ref{ass:cx_monotone}:
$\sum_i\rho_i(\bar X_i)\le\sum_i\rho_i(X_i)$ for every $\bX\in\mathcal{X}$.

For (iii), apply (i) to $\bX^\star$. Assumption~\ref{ass:cx_monotone}
gives $\rho_i(\bar X^\star_i)\le\rho_i(X^\star_i)$ for every~$i$, and
Pareto optimality rules out any strict inequality.
\end{proof}

Theorem~\ref{thm:cc:constrained} provides a direct recipe: to justify
the reduction to the comonotonic subclass on a constrained problem,
one verifies componentwise convex-order solidity of the feasible set,
which decomposes over intersection; see Section~\ref{sec:solidity}. 
Section~\ref{sec:sufficient} carries this out for the constraint
families typically used in practice. When solidity fails, the comonotonic
restriction still returns an upper bound on the constrained
inf-convolution, but the bound may be strict.

%% =========================================================
\section{Constraints that preclude comonotonic reduction}\label{sec:counter}

Three examples show how comonotonic reduction can fail.
Section~\ref{ss:counter:private_deductible} shows that constraints
coupling $X_i$ with state variables outside $\sigma(S)$, such as an
agent's own deductible, measurable with respect to its individual
loss, need not be admissible.
Section~\ref{ss:counter:es_stepup} shows that
$\sigma(S)$-measurability combined with monotonicity in $S$ is not
sufficient either: a fixed retention plus a catastrophe-only capacity
top-up produces a strict gap in
\eqref{eq:intro:central_equality} under Expected Shortfall objectives.
Section~\ref{ss:counter:var} shows that a Value-at-Risk ceiling produces
a strict gap even when the initial endowments are comonotonic and
feasible.

\subsection{Private deductibles on idiosyncratic state}\label{ss:counter:private_deductible}

Consider an agent $i$ that contributes its own portfolio of
exposures to the pool. Retention rules in practice are
defined on the agent's own portfolio, so the feasibility constraint
carries information beyond the aggregate loss.
We model this by a deterministic deductible $d_i$ together with a
retention rule that references the agent's idiosyncratic endowment
$\zeta_i$. Formally, the constraint $X_i=\zeta_i$ on
$\{\zeta_i< d_i\}$ and $X_i\ge d_i$ on $\{\zeta_i\ge d_i\}$ is measurable
with respect to the idiosyncratic $\sigma$-algebra
$\mathcal{G}_i:=\sigma(\zeta_i)$. When
$\sigma(S)\subsetneq\sigma(S,\zeta_i)$, that is, when $\zeta_i$ is not
a function of $S$ alone, the constraint is pathwise and lies outside
the reach of the comonotonic improvement theorem.

\begin{example}\label{ex:counter:idiosyncratic}
Let $\zeta_1,\zeta_2$ be independent Bernoulli$(1/2)$ and set $S:=\zeta_1+\zeta_2$.
Each agent has a deductible $d_i=1$: losses below $d_i$ are fully retained,
and at least $d_i$ is retained on losses at or above $d_i$, giving
\[
\mathcal{X}:=
\left\{(X_1,X_2)\in\mathbb{A}_2(S):
X_i=0 \text{ on }\{\zeta_i=0\},\
X_i\ge 1 \text{ on }\{\zeta_i=1\},\
i=1,2
\right\}.
\]
The autarky allocation $(\zeta_1,\zeta_2)$ lies in $\mathcal{X}$, but the feasible allocations
$(1,0)$ and $(0,1)$ on $\{S=1\}$ are not $\sigma(S)$-measurable: the event
$\{S=1\}$ does not record who bore the risk.
The conditional mean $Y_i:=\mathbb{E}[\zeta_i\mid S]=S/2$ satisfies
$Y_i\le_{\mathrm{cx}}\zeta_i$, yet on $\{\zeta_1=0,\zeta_2=1\}$ the constraint
requires $X_1=0$ while $Y_1=1/2\ne 0$, so $(Y_1,Y_2)\notin\mathcal{X}$ and
$\mathcal{X}$ is not componentwise convex-order solid.
\end{example}

\begin{figure}[ht]
\centering
\begin{minipage}{0.49\linewidth}
\centering
\begin{tikzpicture}
\begin{axis}[
  width=0.95\linewidth, height=0.7\linewidth,
  ybar, bar width=7pt,
  xtick={1,1.7,3},
  xticklabels={{$(0,1)$},{$(1,0)$},{$(1,1)$}},
  xmin=0, xmax=3.7,
  ymin=0, ymax=1.65, ytick={0,1},
  ylabel={Allocated share},
  xlabel={atom $(\zeta_1,\zeta_2)$},
  title={\small Autarky $(\zeta_1,\zeta_2)$},
  legend pos=north west, legend cell align=left,
  legend style={draw=none, fill=none, font=\scriptsize},
  x axis line style={shorten >=-14pt},
  axis lines=left, tick style={black},
  clip=false,
]
\addplot[fill=blue!25, draw=blue] coordinates {(1,0) (1.7,1) (3,1)};
\addlegendentry{Agent 1}
\addplot[fill=green!30, draw=green!60!black] coordinates {(1,1) (1.7,0) (3,1)};
\addlegendentry{Agent 2}
\end{axis}
\end{tikzpicture}
\end{minipage}\hfill
\begin{minipage}{0.49\linewidth}
\centering
\begin{tikzpicture}
\begin{axis}[
  width=0.95\linewidth, height=0.7\linewidth,
  ybar, bar width=7pt,
  xtick={1,1.7,3},
  xticklabels={{$(0,1)$},{$(1,0)$},{$(1,1)$}},
  xmin=0, xmax=3.7,
  ymin=0, ymax=1.65, ytick={0,0.5,1},
  ylabel={Allocated share},
  xlabel={atom $(\zeta_1,\zeta_2)$},
  title={\small $(S/2,S/2)$: violates retention},
  x axis line style={shorten >=-14pt},
  axis lines=left, tick style={black},
  clip=false,
]
\addplot[fill=blue!25, draw=blue] coordinates {(1,0.5) (1.7,0.5) (3,1)};
\addplot[fill=green!30, draw=green!60!black] coordinates {(1,0.5) (1.7,0.5) (3,1)};
\end{axis}
\end{tikzpicture}
\end{minipage}
\caption{Allocations on the four atoms $(\zeta_1,\zeta_2)$ in
Example~\ref{ex:counter:idiosyncratic}. Autarky is feasible on every
atom but is not $\sigma(S)$-measurable, since $X_1$ takes both values
$0$ and $1$ on the layer $\{S=1\}$. The conditional mean $(S/2,S/2)$
is $\sigma(S)$-measurable but violates the retention requirement
$X_i=0$ on $\{\zeta_i=0\}$ at the atoms $(0,1)$ and $(1,0)$.}
\label{fig:counter:private_deductible}
\end{figure}
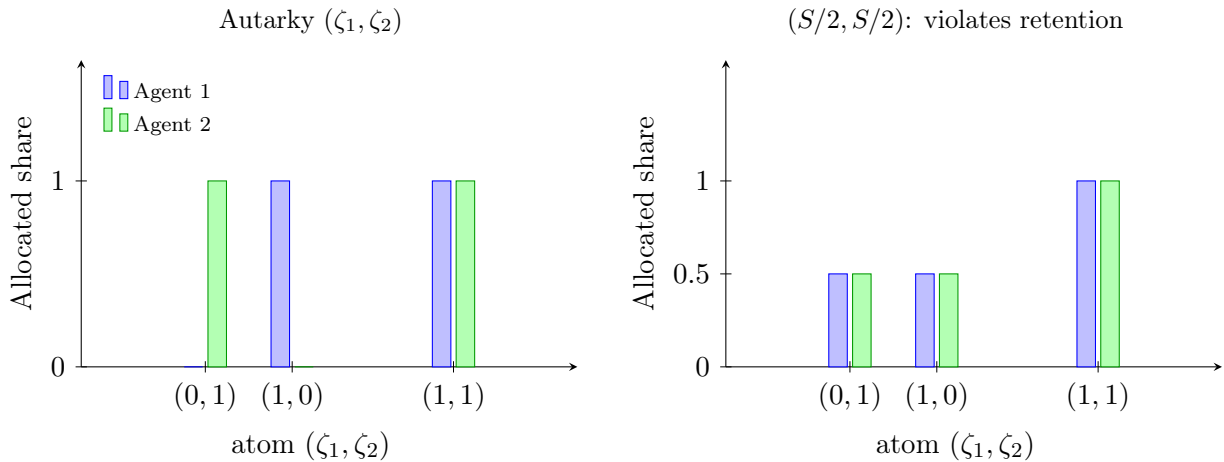

The failure in Example~\ref{ex:counter:idiosyncratic} is pathwise
rather than marginal in the sense of Remark~\ref{rem:two_features}. The
comonotonic improvement theorem produces $\bar X_1$ whose marginal
distribution is a mean-preserving contraction of that of $X_1$, but it
exercises no control over how the values of
$\bar X_1$ couple pathwise with the idiosyncratic state $\zeta_1$. The
non-solid feature of the constraint is the idiosyncratic-state retention
rule $X_1=\zeta_1$ on $\{\zeta_1<d_1\}$, not the deterministic
one-sided bound $X_1\ge d_1$ on $\{\zeta_1\ge d_1\}$: two random
variables with identical marginal distributions can differ in whether
they satisfy the equality-on-$\{\zeta_1<d_1\}$ requirement, because
the coupling with $\zeta_1$ on the idiosyncratic $\sigma$-algebra
$\mathcal{G}_1$ is not recorded by the marginal distribution of $X_1$.

The same mechanism applies to any constraint that couples $X_i$
pathwise with a state variable outside $\sigma(S)$. Examples include
covariance caps $\mathrm{Cov}(X_i,\Sigma)\le c$ with
$\Sigma\notin\sigma(S)$, conditional-expectation constraints
$\E[X_i\mid\mathcal{G}]\le c$ with
$\mathcal{G}\not\subseteq\sigma(S)$, and cross-counterparty exposure
caps $X_i\le\lambda_{ij}X_j$ a.s., which couple $X_i$ with $X_j$ rather
than with a quantity fixed by the problem data.

\subsection{Step-up capacity with fixed retention}\label{ss:counter:es_stepup}

We now exhibit a constraint system that is fully $\sigma(S)$-measurable
and monotone in $S$, yet produces a strict gap in
\eqref{eq:intro:central_equality} under Expected Shortfall preferences.
The specification combines a fixed retention
with a catastrophe-only capacity top-up arising
from contingent capital, a reinstatement line, or a government
backstop.

\begin{example}\label{ex:counter:es_stepup}
Let $S$ take the values $1,2,3$ with probability $1/3$ each, and write
$X_2=S-X_1$. The treaty fixes a deterministic retention on
non-catastrophe states and a step-up cap on the catastrophe state.
We model it as an aggregate-indexed treaty, i.e.\ require $X_1$ to be
$\sigma(S)$-measurable together with $X_1=1/4$ on $\{S\le2\}$ and
$1/4\le X_1\le 7/4$ on $\{S=3\}$. The $\sigma(S)$-measurability
restriction is part of the constraint: a treaty whose terms reference
only the aggregate loss specifies the ceded amount as a function of
$S$ alone. We return below to what happens if this restriction is
dropped. Under these constraints, every feasible allocation has the
form
$X_1=\tfrac14\,\id_{\{S\le 2\}}+a\,\id_{\{S=3\}}$ and
$X_2=\tfrac34\,\id_{\{S=1\}}+\tfrac74\,\id_{\{S=2\}}+(3-a)\,\id_{\{S=3\}}$
for some $a\in[1/4,7/4]$. Since $X_1$ is already nondecreasing in $S$,
comonotonicity reduces to $X_2$ being nondecreasing in $S$, which
requires $3-a\ge 7/4$, that is, $a\le 5/4$.

Take $\rho_1=\ES_{1/5}$ and $\rho_2=\ES_{1/3}$. A direct
computation shows that $\rho_1(X_1)+\rho_2(X_2)$ is strictly
decreasing in~$a$, so the constrained minimum is attained at
$a^\star=7/4$ (value $19/8$), while the comonotonic restriction
$a\in[1/4,5/4]$ achieves only $29/12$; the strict gap
$29/12-19/8=1/24>0$ shows \eqref{eq:intro:central_equality} fails.
The comonotonic rearrangement of the optimizer ($a=7/4$) along $S$
gives $\bar X_1=(1/4,3/4,5/4)$, which violates the fixed-retention
clause $X_1=1/4$ on $\{S\le2\}$, exhibiting the failure of componentwise
convex-order solidity.
\end{example}

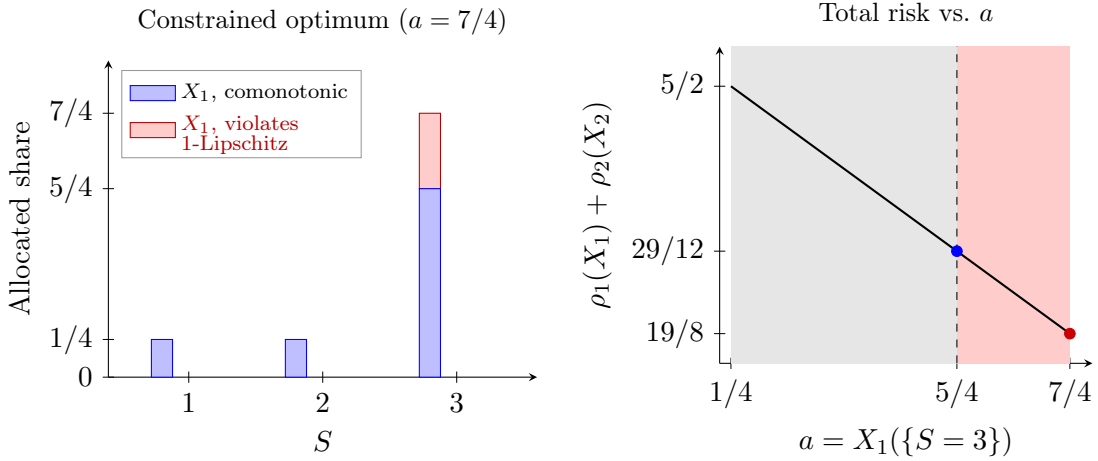
\begin{figure}[ht]
\centering
\begin{minipage}{0.45\linewidth}
\centering
\begin{tikzpicture}
\begin{axis}[
  width=0.98\linewidth, height=0.78\linewidth,
  ybar stacked, bar width=8pt,
  title={\small Constrained optimum ($a=7/4$)},
  xlabel={$S$}, ylabel={Allocated share},
  xmin=0.4, xmax=3.6,
  xtick={1,2,3},
  ymin=0, ymax=2.1,
  ytick={0,0.25,1.25,1.75},
  yticklabels={$0$,$1/4$,$5/4$,$7/4$},
  axis lines=left, tick style={black}, enlargelimits=false,
  legend pos=north west,
  legend style={font=\scriptsize, cells={anchor=west}, draw=black!40,
                fill=white, inner sep=3pt, row sep=1pt},
]
% X1 comonotonic portion
\addplot[fill=blue!25, draw=blue, forget plot]
  coordinates {(0.8,0.25) (1.8,0.25) (2.8,1.25)};
% X1 non-comonotonic excess (stacked on top at S=3)
\addplot[fill=red!20, draw=red!70!black, forget plot]
  coordinates {(0.8,0) (1.8,0) (2.8,0.5)};
\addlegendimage{fill=blue!25, draw=blue, area legend}
\addlegendentry{$X_1$, comonotonic}
\addlegendimage{fill=red!20, draw=red!70!black, area legend}
\addlegendentry{\textcolor{red!70!black}{\parbox{2.2cm}{$X_1$, violates\\[-2pt]1-Lipschitz}}}
\end{axis}
\end{tikzpicture}
\end{minipage}
\begin{minipage}{0.45\linewidth}
\centering
\begin{tikzpicture}
\begin{axis}[
  width=0.88\linewidth, height=0.78\linewidth,
  title={\small Total risk vs.\ $a$},
  xlabel={$a=X_1(\{S=3\})$},
  ylabel={$\rho_1(X_1)+\rho_2(X_2)$},
  xmin=0.2, xmax=1.85, ymin=2.36, ymax=2.52,
  xtick={0.25,1.25,1.75},
  xticklabels={$1/4$,$5/4$,$7/4$},
  ytick={2.375,2.41667,2.5},
  yticklabels={$19/8$,$29/12$,$5/2$},
  axis lines=left, tick style={black}, enlargelimits=false,
]
\fill[gray!20]  (axis cs:0.25,2.36) rectangle (axis cs:1.25,2.55);
\fill[red!20]   (axis cs:1.25,2.36) rectangle (axis cs:1.75,2.55);
\draw[dashed, black] (axis cs:1.25,2.36) -- (axis cs:1.25,2.55);
\addplot[thick, color=black, solid, domain=0.25:1.75, samples=2]
  {2.5 - (x-0.25)/12};
\addplot[only marks, mark=*, mark size=2pt, color=blue, forget plot]
  coordinates {(1.25,2.41667)};
\addplot[only marks, mark=*, mark size=2pt, color=red!80!black, forget plot]
  coordinates {(1.75,2.375)};
\end{axis}
\end{tikzpicture}
\end{minipage}
\caption{Mechanism in Example~\ref{ex:counter:es_stepup}, with $X_1=1/4$
fixed on $\{S\le2\}$ and $X_1(\{S=3\})=a\in[1/4,7/4]$ free.
Left: comonotonicity of $(X_1,X_2)$ with $S$ is equivalent to the
1-Lipschitz comparison
$X_1(\{S=3\})-X_1(\{S=2\})\le S(\{S=3\})-S(\{S=2\})=1$, i.e.\ to the
slope from $(2,1/4)$ to $(3,a)$ lying in $[0,1]$. The dashed line is
the slope-$1$ cone boundary; values on or below the cone
($a\in[1/4,5/4]$, gray) are comonotonic, while $a\in(5/4,7/4]$
(red) remains constraint-feasible but breaks the monotonicity of
$X_2=S-X_1$ in $S$. Right: the constrained minimum $19/8$ at $a=7/4$
falls in the non-comonotonic region; the comonotonic minimum $29/12$
at $a=5/4$ is strictly larger.}
\label{fig:counter:es_stepup}
\end{figure}

The mechanism behind Example~\ref{ex:counter:es_stepup} is that the
constraint allows $X_1$ to jump by as much as
$X_1(\{S=3\})-X_1(\{S=2\})=7/4-1/4=3/2$ when $S$ moves from $2$ to $3$,
while the aggregate loss itself increases by only
$S(3)-S(2)=1$ across the same step. In particular, the constraint
specification fails the 1-Lipschitz comparison
$X_1(\{S=s'\})-X_1(\{S=s\})\le s'-s$ for $s'>s$: the admissible gap
$3/2$ strictly exceeds the aggregate-loss increment $1$. The complementary
allocation $X_2=S-X_1$, therefore decreases from the adverse layer to
the catastrophe layer whenever agent~1 fully exploits the catastrophe
capacity, which breaks the comonotonicity of $X_2$ with $S$. The
unrestricted constrained optimizer uses the full catastrophe capacity.
At $S=3$, raising $X_1$ by one unit increases $\ES_{1/5}(X_1)$ by
$5/12$ but lowers $\ES_{1/3}(X_2)$ by $1/2$, so shifting tail mass
onto agent~1 lowers the sum $\ES_{1/5}(X_1)+\ES_{1/3}(X_2)$. The comonotonic
class cannot reach this allocation: it can raise $X_1$ on $\{S=3\}$
only to $1+X_1$ on $\{S=2\}$, that is, only to $5/4$, losing $1/2$ of
tail capacity.

This demonstrates that $\sigma(S)$-measurability and monotonicity of
the cap in $S$ are not sufficient for convex-order solidity. The
failure mechanism is any step-up schedule with a slope exceeding one
between adjacent states. 

\subsection{Value-at-Risk ceilings}\label{ss:counter:var}

The Value-at-Risk is not convex-order consistent: a degenerate
random variable can have higher $\VaR_\alpha$ than a dispersed one with
the same mean, so a $\VaR$ ceiling is not convex-order solid. In
Example~\ref{ex:counter:var} below, the initial endowments are
comonotonic and satisfy a Solvency-II-type $\VaR_{0.995}$ constraint,
yet the constrained infimum is not attained within the comonotonic
class and the comonotonic-restricted value is strictly larger than
the unrestricted constrained value.

\begin{example}\label{ex:counter:var}
Let $(A_0,A_{1a},A_{1b},A_2)$ be disjoint events with $\p(A_0)=0.9925$,
$\p(A_{1a})=\p(A_{1b})=\p(A_2)=0.0025$, and set
$A_1:=A_{1a}\cup A_{1b}$, so that $\p(A_1)=0.005$. Define the
comonotonic endowments $\zeta_1=\zeta_2=\id_{A_1}+2\,\id_{A_2}$, so
that $S=2\,\id_{A_1}+4\,\id_{A_2}$. Take $\rho_1=\ES_{0.99}$,
$\rho_2=\ES_{0.9925}$, and
$\Cset=\{X_i\ge0,\;\VaR_{0.995}(X_i)\le 1,\;i=1,2\}$. Under autarky
$\VaR_{0.995}(\zeta_i)=1$, so the endowment allocation is comonotonic
and feasible. Nonnegativity of each share $X_i$ is part of the constraint.

One may show that the minimizers of $\rho_1(X_1)+\rho_2(X_2)$ over
$\mathbb A_2(S)$, $\mathbb A_2^{\Cset}(S)$, and
$\mathbb A_2^{\Cset}(S)\cap\mathbb A_2^{+}(S)$ are, respectively,
\[
\begin{aligned}
(X_1,X_2)&=(S,\,0), & \sum_{i=1}^{2}\rho_i(X_i)&=2,\\[2pt]
(X_1,X_2)&=(\id_{A_{1a}}+2\,\id_{A_{1b}}+4\,\id_{A_2},\ \id_{A_{1a}}),
  & \dsquareC_{i=1}^{2}\rho_i(S)&=25/12,\\[2pt]
(X_1,X_2)&=(\id_{A_1}+3\,\id_{A_2},\ \id_{A_1}+\id_{A_2}),
  & \dboxplusC_{i=1}^{2}\rho_i(S)&=9/4.
\end{aligned}
\]
Since $25/12<9/4$, the constrained comonotonic value strictly exceeds
the constrained value, establishing non-comonotonicity of the
constrained infimum.
The mechanism is that a VaR ceiling permits a non-$\sigma(S)$-measurable
split of the intermediate loss layer: $X_1^\star$ takes the value~$1$
on $A_{1a}$ and~$2$ on $A_{1b}$, even though $S=2$ on both atoms.
Comonotonicity forces each $X_i$ to be constant on each level set
of~$S$, ruling out this selective assignment. A regulator who
wants a tail-risk cap on each share can use $\ES$ in place of $\VaR$.
Among coherent, law-invariant risk measures that dominate
$\VaR_\alpha$, $\ES_\alpha$ is the smallest, so an $\ES_\alpha$ cap
at the same confidence level is the closest coherent tightening of
the original $\VaR_\alpha$ cap. Coherence of $\ES$ makes its
sublevel sets convex-order solid, so Theorem~\ref{thm:cc:constrained}
applies and the comonotonic reduction is restored.
\end{example}

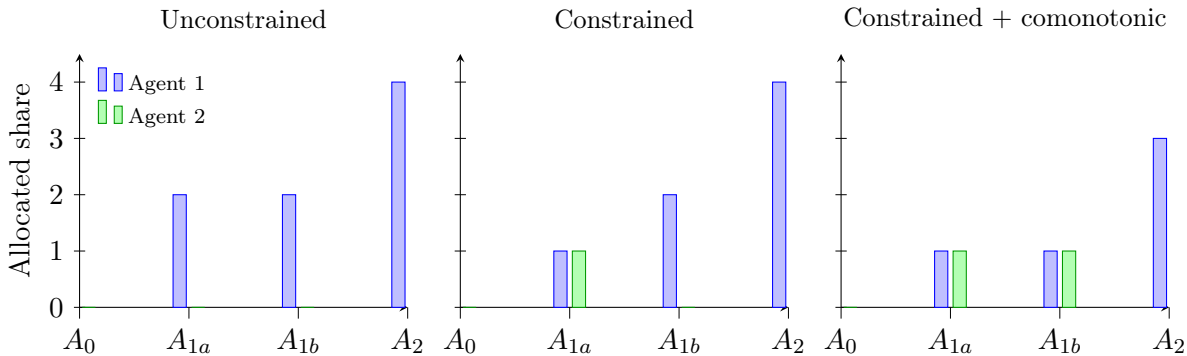
\begin{figure}[ht]
\centering
\begin{tikzpicture}
\begin{groupplot}[
  group style={
    group size=3 by 1,
    horizontal sep=0.7cm,
    ylabels at=edge left,
    yticklabels at=edge left,
  },
  width=0.36\linewidth, height=0.30\linewidth,
  symbolic x coords={A0,A1a,A1b,A2},
  xtick=data,
  xticklabels={$A_0$,$A_{1a}$,$A_{1b}$,$A_2$},
  ymin=0, ymax=4.5, ytick={0,1,2,3,4},
  ylabel={Allocated share},
  enlarge x limits={lower=0.2, upper=0.55},
  axis lines=left, tick style={black},
  title style={font=\small},
  legend style={draw=none, fill=none, font=\scriptsize},
  legend cell align=left,
]
\nextgroupplot[title={Unconstrained}, legend pos=north west, ybar, bar width=5pt]
\addplot[fill=blue!25, draw=blue] coordinates {(A0,0) (A1a,2) (A1b,2) (A2,4)};
\addlegendentry{Agent 1}
\addplot[fill=green!30, draw=green!60!black] coordinates {(A0,0) (A1a,0) (A1b,0) (A2,0)};
\addlegendentry{Agent 2}
\nextgroupplot[title={Constrained}, ybar, bar width=5pt]
\addplot[fill=blue!25, draw=blue] coordinates {(A0,0) (A1a,1) (A1b,2) (A2,4)};
\addplot[fill=green!30, draw=green!60!black] coordinates {(A0,0) (A1a,1) (A1b,0) (A2,0)};
\nextgroupplot[title={Constrained $+$ comonotonic}, ybar, bar width=5pt]
\addplot[fill=blue!25, draw=blue] coordinates {(A0,0) (A1a,1) (A1b,1) (A2,3)};
\addplot[fill=green!30, draw=green!60!black] coordinates {(A0,0) (A1a,1) (A1b,1) (A2,1)};
\end{groupplot}
\end{tikzpicture}
\caption{The three minimizers in Example~\ref{ex:counter:var}, shown on
the four atoms ($\p(A_0)=0.9925$, $\p(A_{1a})=\p(A_{1b})=\p(A_2)=0.0025$).
The constrained optimum splits the layer $A_1$ asymmetrically across
$A_{1a}$ and $A_{1b}$; the comonotonic restriction forces equal shares
on these two atoms and pays a strictly larger total risk
$9/4>25/12$.}
\label{fig:counter:var}
\end{figure}

The splitting of the level set $\{S=2\}=A_{1a}\cup A_{1b}$ requires
the probability space to be rich enough that this level set is not a
single atom; on an atomless space, any level set of positive measure
can be split, and the example applies whenever $\p(S=2)>0$.

Unconstrained problems
with convex-order-consistent preferences admit comonotonic optima by
the comonotonic improvement theorem,
and distortion-riskmetric settings can exhibit comonotonic and
counter-monotonic structures coexisting on different quantile segments
\citep{lauzier2026risk}. The feasible allocation constructed above
splits the level set $\{S=2\}$ across $A_{1a}$ and~$A_{1b}$, so it is
not $\sigma(S)$-measurable and cannot arise from any common
quantile-segment reordering of the two shares. $\VaR$ constraints
therefore admit feasible allocations that strictly improve on every
comonotonic allocation and whose dependence structure is more intricate
than a quantile-segment mixture.

One further inadmissible family involves lower bounds on
convex-order-consistent functionals. Regulatory minimum capital or
reserve charges that use a convex-order-monotone risk measure fit this
pattern, including the Expected Shortfall charge on market risk under
the Basel~III FRTB and per-member reserve floors computed via
variance-based or concave-distortion premium principles. 
Two-sided tolerance intervals around a target yield the
same form $\varrho_i(X_i)\ge c_i$ with $\varrho_i$ convex-order
monotone. Because convex-order reduction decreases $\varrho_i$, such
lower bounds are not convex-order solid. The only admissible lower
bound on a marginal-law functional is on the expectation, which convex
order preserves exactly.

%% =========================================================
\section{Constraint families admitting comonotonic optima}\label{sec:sufficient}

This section gathers the constraint families most relevant to risk
management and regulation that generate convex-order-solid feasible
sets and therefore admit the reduction of
Theorem~\ref{thm:cc:constrained}. We collect four marginal constraint
families whose induced feasible sets are componentwise convex-order
solid in Theorem~\ref{thm:solid_families}. Convex piecewise-linear
funding-cost adjustments folded into the objective are handled
separately at the end of this section.

Before turning to concrete families, we record a one-dimensional
version of solidity that we use to verify each family componentwise.

\begin{definition}[Marginally convex-order solid]\label{def:marg_solid}
A set $\mathfrak{C}\subseteq L^1$ is \emph{marginally convex-order
solid} if $Z\in\mathfrak{C}$ and $W\cx Z$ imply $W\in\mathfrak{C}$.
\end{definition}

If each component constraint $X_i\in\mathfrak{C}_i$ is marginally
convex-order solid, then the feasible set
$\{\bX\in\mathbb{A}_n(S):X_i\in\mathfrak{C}_i\ \text{for every }i\}$
is componentwise convex-order solid in the sense of
Definition~\ref{def:cc:solid}. This is the mechanism through which
Theorem~\ref{thm:solid_families} below combines its four
items; intersections close under both marginal and componentwise
solidity, as established in Section~\ref{sec:solidity}.

\begin{theorem}[Solid constraint families]\label{thm:solid_families}
Each of the following constraint sets is marginally convex-order
solid, and componentwise convex-order solidity of the induced
feasible set follows from the paragraph preceding the theorem,
with closure under intersections noted in
Section~\ref{sec:solidity}:
\begin{enumerate}
\item[(a)] Deterministic pathwise bounds
$\{Z\in L^1:\ \ell\le Z\le u\ \text{a.s.}\}$ with
$\ell\in[-\infty,+\infty)$, $u\in(-\infty,+\infty]$, $\ell\le u$.
\item[(b)] Linear expectation constraints $\{\E[Z]\le c\}$,
$\{\E[Z]=c\}$, $\{\E[Z]\ge c\}$.
\item[(c)] Orlicz-type bounds $\{\E[\varphi(Z)]\le c\}$ for convex
$\varphi:\R\to\R$ bounded below, including $L^p$-norm and
entropic bounds; variance bounds $\{\mathrm{Var}(Z)\le v\}$ follow
from $\varphi(x)=x^2$ together with preservation of the mean.
\item[(d)] Risk-measure ceilings $\{\varrho(Z)\le c\}$ for any
convex-order-consistent $\varrho:L^1\to\R\cup\{+\infty\}$, $c\in\R$.
\end{enumerate}
\end{theorem}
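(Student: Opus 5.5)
The plan is to verify marginal convex-order solidity separately for each of the four families. Each time we fix $Z$ in the set and $W\cx Z$, and show that $W$ lies in the same set. The componentwise solidity of the induced feasible set then follows from the paragraph preceding the theorem, since $Y_i\cx X_i$ and $X_i\in\mathfrak{C}_i$ give $Y_i\in\mathfrak{C}_i$ for each $i$. Closure under intersections is already noted in Section~\ref{sec:solidity}. I would take the items in the order (b), (c), (d), (a), because (a) is the only one needing a genuine argument beyond the definition.

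For (b), Definition~\ref{def:cx} gives $\E[W]=\E[Z]$ directly, so all three linear constraints transfer. For (c), apply the defining inequality of Definition~\ref{def:cx} to the convex function $\varphi$, which yields $\E[\varphi(W)]\le\E[\varphi(Z)]\le c$. One subtlety is integrability: $\varphi(Z)$ may fail to be integrable. I would handle this by noting that $\varphi$ is bounded below, so both expectations are well defined in $(-\infty,+\infty]$. I would then extend the convex-order inequality to such $\varphi$ by monotone approximation, for instance by approximating $\varphi$ from below by an increasing sequence of convex functions of linear growth (maxima of finitely many affine minorants), which are integrable under $L^1$. Monotone convergence then passes the inequality to the limit. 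The variance bound follows by taking $\varphi(x)=x^2$ and using equality of means. For (d), the inclusion is immediate from convex-order consistency: $\varrho(W)\le\varrho(Z)\le c$.

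For (a), I would use convex functions that vanish exactly on $[\ell,u]$. If $u<\infty$, take $\phi(x)=(x-u)^+$. Then $\E[(W-u)^+]\le\E[(Z-u)^+]=0$, so $W\le u$ a.s. Symmetrically, if $\ell>-\infty$, take $\phi(x)=(\ell-x)^+$ to get $W\ge\ell$ a.s. Infinite endpoints impose no constraint, and $W\in L^1$ holds by assumption.

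I expect the only real obstacle to be the integrability issue in (c), namely justifying the convex-order inequality for convex $\varphi$ whose expectation under $Z$ may be $+\infty$ or is not a priori finite. The affine-minorant approximation argument above is the first tool I would try. The remaining items are one-line consequences of the definitions.
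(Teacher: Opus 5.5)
Your proposal is correct and follows the paper's proof item by item: the hinges $(x-u)^+$ and $(\ell-x)^+$ for (a), equality of means for (b), the test function $\varphi$ (with $x^2$ plus equal means for the variance bound) for (c), and convex-order consistency for (d). The only difference is your monotone approximation of $\varphi$ by maxima of affine minorants in (c). The paper skips this step because Definition~\ref{def:cx} already quantifies over every convex $\phi$, and the extended expectations are well defined since $\varphi$ has an affine minorant and $W,Z\in L^1$. Your extra step is valid, and it would be needed only if the convex order were defined through stop-loss transforms or integrated quantiles.
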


% \begin{proof}
% Each item uses a single convex test function in
% Definition~\ref{def:cx}: (a) the hinges $(x-u)^+$ and $(\ell-x)^+$;
% (b) the identity, preserved exactly; (c) the given $\varphi$;
% (d) convex-order consistency of $\varrho$ directly.
% \end{proof}

\begin{proof}
We verify that each family is marginally convex-order solid; the
componentwise statement for the induced feasible set was recorded in
the paragraph preceding the theorem. In each case, we fix
$Z\in\mathfrak{C}$ and $W\cx Z$, and check that $W\in\mathfrak{C}$.

For item~(a), suppose first that $u<+\infty$ and consider the convex
hinge $\phi_u(x)=(x-u)^+$. Because $Z\le u$ almost surely, one has
$\phi_u(Z)=0$ a.s., so $\E[\phi_u(Z)]=0$. Definition~\ref{def:cx}
applied to $\phi_u$ then yields $\E[\phi_u(W)]\le 0$, and since
$\phi_u\ge 0$ this forces $\phi_u(W)=0$ a.s., that is, $W\le u$ a.s.
The lower bound $W\ge\ell$ a.s.\ follows by the same argument with
$\phi_\ell(x)=(\ell-x)^+$ whenever $\ell>-\infty$. If $u=+\infty$ or
$\ell=-\infty$, the corresponding inequality is vacuous.

For item~(b), both $x\mapsto x$ and $x\mapsto -x$ are convex, so
Definition~\ref{def:cx} yields $\E[W]\le\E[Z]$ and $-\E[W]\le-\E[Z]$,
hence $\E[W]=\E[Z]$. The three sets $\{\E[Z]\le c\}$, $\{\E[Z]=c\}$
and $\{\E[Z]\ge c\}$ are therefore each preserved under $\cx$.

For item~(c), the test function $\varphi$ is convex by hypothesis,
so Definition~\ref{def:cx} gives $\E[\varphi(W)]\le\E[\varphi(Z)]\le c$;
the boundedness below of $\varphi$ ensures that the extended
expectations are well defined. The variance bound follows by taking
$\varphi(x)=x^2$: combining $\E[W^2]\le\E[Z^2]$ with $\E[W]=\E[Z]$
from item~(b) gives $\mathrm{Var}(W)\le\mathrm{Var}(Z)\le v$.

For item~(d), convex-order consistency of $\varrho$ is exactly the
implication $W\cx Z\Rightarrow\varrho(W)\le\varrho(Z)$, so
$\varrho(W)\le\varrho(Z)\le c$ directly.
\end{proof}

Item~(d) applies to every law-invariant coherent risk measure and to
every distortion risk measure with a concave distortion, including
Expected Shortfall, spectral measures, Wang transforms with concave
generator, and proportional hazard transforms; see
\citet{wang1998ordering} and \citet{dhaene2002actuarial} for the
convex-order properties of distortion risk measures.
Expected Shortfall ceilings $\ES_{\alpha_i}(X_i)\le R_i$, such as
those used in the Basel FRTB framework and the Swiss Solvency Test,
therefore generate convex-order-solid feasibility sets, while VaR-based
regimes such as the Solvency~II standard capital requirement fall
under the caveat in Section~\ref{ss:counter:var}. Heterogeneity of
$\alpha_i$ across firms does not break solidity.

\begin{corollary}[Capacity caps]\label{cor:caps}
For deterministic constants $K_i\in[0,+\infty]$, the feasibility set
$\{\bX\in\mathbb{A}_n(S):\ 0\le X_i\le K_i\ \text{a.s.\ for every } i\}$
is componentwise convex-order solid, so
Theorem~\ref{thm:cc:constrained} applies under the standing
hypotheses whenever this set is nonempty.
\end{corollary}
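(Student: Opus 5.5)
The corollary follows by combining item~(a) of Theorem~\ref{thm:solid_families} with the paragraph preceding that theorem. For each $i$, set $\mathfrak{C}_i:=\{Z\in L^1:\ 0\le Z\le K_i\ \text{a.s.}\}$. This is a deterministic pathwise bound with $\ell=0\in[-\infty,+\infty)$ and $u=K_i\in(-\infty,+\infty]$. Since $K_i\ge0$, we have $\ell\le u$, so item~(a) applies. The case $K_i=+\infty$ is also covered there, because the upper inequality is then vacuous and only the hinge $(0-x)^+$ is needed. Hence each $\mathfrak{C}_i$ is marginally convex-order solid.

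Next, the feasibility set of the corollary is exactly $\{\bX\in\mathbb{A}_n(S):X_i\in\mathfrak{C}_i\ \text{for every }i\}$. The paragraph before Theorem~\ref{thm:solid_families} gives componentwise solidity, and I would spell the argument out. Take $\bX$ in the set and $\bY\in\mathbb{A}_n(S)$ with $Y_i\cx X_i$ for every $i$. Marginal solidity of $\mathfrak{C}_i$ gives $Y_i\in\mathfrak{C}_i$. Since $\bY$ already clears by assumption, $\bY$ lies in the set, which is Definition~\ref{def:cc:solid}.

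Finally, when the set is nonempty, all hypotheses of Theorem~\ref{thm:cc:constrained} hold: the space is atomless, $S\in L^1$, and $\mathcal{X}$ is nonempty and componentwise convex-order solid. Its conclusions (i)--(iii) therefore follow. Nonemptiness is assumed rather than proved; it holds, for instance, iff $0\le S\le\sum_i K_i$ a.s.

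There is no real obstacle in this argument. The only points needing care are the edge cases $K_i=+\infty$ and $K_i=0$, where the bounds are vacuous or degenerate but still satisfy the hypotheses of item~(a).
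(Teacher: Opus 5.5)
Your proposal is correct and follows the paper's route: the corollary is item~(a) of Theorem~\ref{thm:solid_families} with $\ell=0$, $u=K_i$, lifted to componentwise solidity by the marginal-to-componentwise observation preceding that theorem, after which Theorem~\ref{thm:cc:constrained} applies directly. Your explicit handling of the edge cases $K_i\in\{0,+\infty\}$ and your nonemptiness criterion $0\le S\le\sum_i K_i$ a.s.\ are correct additions that the paper leaves implicit.
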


Corollary~\ref{cor:caps} covers limited-liability clauses,
per-member capacity limits in reinsurance pools, layer widths in
excess-of-loss reinsurance programmes, and public--private pool structures with
heterogeneous governmental backstop caps. The analysis of inadmissible constraints in
Section~\ref{sec:counter} shows why the cap $K_i$ must be a
deterministic constant rather than a function of $S$ with over-unit
slope, and rules out caps depending on state variables outside
$\sigma(S)$.

Aggregate-indexed pathwise bounds $\ell(S)\le X_i\le u(S)$ with
nondecreasing $\ell,u$ are not automatically admissible:
Example~\ref{ex:counter:es_stepup} shows that monotonicity alone does
not prevent the comonotonic rearrangement from violating the envelope.
Such caps should be treated as potentially inadmissible unless a
separate solidity argument is supplied.

Premium budgets $\E[X_i]\le P_i$ and actuarial-fairness constraints
$\E[X_i]=\E[\zeta_i]$, which equate each agent's expected
post-allocation loss with its expected initial loss, are admissible
by item~(b); equality and lower-bound forms are admissible as well, a
special feature of the linearity of $\E$.

Each agent's preferences may combine a baseline risk measure with a
convex loading of the post-allocation loss without disturbing
solidity of the feasible set. In
tiered-capital or cost-of-funds models, agent $i$'s objective takes
the form $\rho_i(X_i)+\E[\varphi_i(X_i)]$, where $\varphi_i:\R\to\R$ is
convex, nondecreasing, piecewise linear, and bounded below (so that
$\E[\varphi_i(X_i)]$ is an extended expectation with integrable
negative part, taking values in $\R\cup\{+\infty\}$). A typical
specification is
$\varphi_i(x)=\alpha_i(x-R_i)^+ +(\beta_i-\alpha_i)(x-R_i-B_i)^+$,
with $R_i\in\R$ a retention threshold, $B_i\ge0$ a layer width, and
$0\le\alpha_i\le\beta_i$ so that the slopes are nonnegative and
nondecreasing, encoding a reserve, collateral, and fire-sale ladder. The
adjusted objective is convex-order consistent: convex-order consistency
of $X\mapsto\E[\varphi_i(X)]$ follows by applying
Definition~\ref{def:cx} to the convex test function $\varphi_i$, and
summing with the convex-order-consistent $\rho_i$ preserves the
property, so Theorem~\ref{thm:cc:constrained} applies directly on any
convex-order-solid feasibility set.

%% =========================================================
\section{Application: constrained mean-variance risk sharing}\label{sec:mv}

We illustrate Theorem~\ref{thm:cc:constrained} on the mean-variance
risk-sharing problem. The unconstrained mean-variance planner admits a
well-known affine quota-share optimum. Under deterministic caps and
floors, Theorem~\ref{thm:solid_families}(a) (and
Corollary~\ref{cor:caps} in the special case $L_i=0$) licenses the
comonotonic reduction, and
the constrained optimum takes a truncated-affine form driven by a single
statewise shadow price. Hard caps produce a piecewise linear sharing
rule with slopes determined by the active set of uncapped agents.

\subsection{Unconstrained mean-variance sharing}

Fix risk-aversion parameters $\delta_1,\dots,\delta_n>0$, and assume
$S\in L^2$ throughout this section. Each agent $i$ evaluates its
share through the mean-variance risk measure
\begin{equation}\label{eq:mv:rho}
\rho_i(X)\ :=\ \E[X]+\delta_i\,\mathrm{Var}(X),\qquad X\in L^2.
\end{equation}
The functional $\rho_i$ is law-invariant and convex-order consistent:
if $Y\cx X$ with $X\in L^2$, then Definition~\ref{def:cx} applied with
$\varphi(x)=x^2$ gives $\E[Y^2]\le\E[X^2]<\infty$, so $Y\in L^2$,
and combined with $\E[Y]=\E[X]$ this yields
$\mathrm{Var}(Y)\le\mathrm{Var}(X)$ and therefore $\rho_i(Y)\le\rho_i(X)$.
Assumption~\ref{ass:cx_monotone} is in force on the domain $L^2$.

The clearing condition forces $\sum_i\E[X_i]=\E[S]$, so the
unconstrained risk-sharing problem
\begin{equation}\label{eq:mv:unconstrained}
\inf_{\bX\in\mathbb{A}_n(S)\cap(L^2)^n}\sum_{i=1}^n\rho_i(X_i)
\end{equation}
is equivalent, up to the additive constant $\E[S]$, to the pure
variance-minimization problem
\[
\inf_{\bX\in\mathbb{A}_n(S)\cap(L^2)^n}\sum_{i=1}^n\delta_i\,\mathrm{Var}(X_i).
\]
The following proposition is the mean-variance instance of the
proportional sharing rule for dilated risk measures, noted in
\citet[Remark~4.6]{acciaio2007optimal}; see
\citet[Theorem~3.9]{barrieu2005inf} for a proof, and
\citet{bernard2025risk} for a related statement. We state it to fix
the notation used in the constrained problem.

\begin{proposition}\label{prop:mv:unconstrained}
The unconstrained mean-variance problem \eqref{eq:mv:unconstrained}
admits the optimum
\begin{equation}\label{eq:mv:opt:affine}
X_i^{\star}\ =\ \frac{\delta_i^{-1}}{\sum_{j=1}^n\delta_j^{-1}}\,S\,+\,\beta_i,
\end{equation}
where $\beta_1,\dots,\beta_n\in\R$ with $\sum_i\beta_i=0$;
the optimizer is unique $\p$-a.s.\ up to cash transfers $\bm{\beta}$.
\end{proposition}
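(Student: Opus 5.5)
The plan is to reduce the mean-variance problem to pure variance minimisation and then prove optimality by an orthogonal decomposition. As noted above, the clearing condition fixes $\sum_i\E[X_i]=\E[S]$. So it suffices to minimise $\sum_i\delta_i\mathrm{Var}(X_i)$ over $\bX\in\mathbb{A}_n(S)\cap(L^2)^n$.

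Write $w_i:=\delta_i^{-1}/\sum_j\delta_j^{-1}$, so that $\sum_iw_i=1$ and $\delta_iw_i=\lambda:=(\sum_j\delta_j^{-1})^{-1}$ is the same for every $i$. Any feasible $\bX$ can be written as $X_i=w_iS+\beta_i+D_i$, where $\beta_i:=\E[X_i]-w_i\E[S]$ and each $D_i$ is centred. Clearing forces $\sum_i\beta_i=0$ and $\sum_iD_i=0$ a.s.

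Expanding the variances gives
\[
\sum_i\delta_i\mathrm{Var}(X_i)=\sum_i\delta_iw_i^2\mathrm{Var}(S)+2\sum_i\delta_iw_i\,\mathrm{Cov}(S,D_i)+\sum_i\delta_i\mathrm{Var}(D_i).
\]
The middle term equals $2\lambda\,\mathrm{Cov}(S,\sum_iD_i)=0$. Hence the objective equals $\lambda\mathrm{Var}(S)+\sum_i\delta_i\mathrm{Var}(D_i)$, which is at least $\lambda\mathrm{Var}(S)$.

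Equality holds if and only if $\mathrm{Var}(D_i)=0$ for all $i$. Since each $D_i$ is centred, this means $D_i=0$ a.s. for all $i$. This gives existence of the optimum \eqref{eq:mv:opt:affine}, and uniqueness up to the deterministic transfers $\bm\beta$ with $\sum_i\beta_i=0$. Conversely, every such $\bm\beta$ gives the same objective value, since constants change neither the variances nor the total mean.

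The only step needing care is the cross term. The choice $w_i\propto\delta_i^{-1}$ is exactly what makes the coefficients $\delta_iw_i$ equal, so that the clearing identity $\sum_iD_i=0$ kills the cross term. All remaining steps are routine $L^2$ bookkeeping. Theorem~\ref{thm:cc:constrained} is not needed here, although the optimum is visibly comonotonic, consistent with it.
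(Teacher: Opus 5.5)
Your proof is correct and complete. The paper does not prove the proposition itself. It presents it as the mean-variance instance of the proportional sharing rule for dilated functionals and defers to \citet[Remark~4.6]{acciaio2007optimal} and \citet[Theorem~3.9]{barrieu2005inf}.

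That route writes $\rho_i(X)=\gamma_i\rho(X/\gamma_i)$ with base functional $\rho(X)=\E[X]+\mathrm{Var}(X)$ and $\gamma_i=\delta_i^{-1}$. It then applies Jensen's inequality to the convex $\rho$ with weights $\gamma_i/\Gamma$, where $\Gamma:=\sum_j\gamma_j$. This gives $\sum_i\gamma_i\rho(X_i/\gamma_i)\ge\Gamma\rho(S/\Gamma)$, with equality at $X_i=(\gamma_i/\Gamma)S$. The advantage of that argument is generality: it works for dilations of any convex base functional. The cost is that characterising \emph{all} optimizers needs a separate step. One must use that variance is strictly convex modulo constants, so that equality in Jensen forces the $X_i/\gamma_i$ to differ only by constants.

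Your orthogonal decomposition is specific to variance, but it is more elementary and self-contained. The exact identity $\sum_i\delta_i\mathrm{Var}(X_i)=\lambda\,\mathrm{Var}(S)+\sum_i\delta_i\mathrm{Var}(D_i)$ makes the excess cost over the optimum explicit. From it you get the lower bound, attainment, and uniqueness up to balanced cash transfers in one stroke.
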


\subsection{Constrained mean-variance sharing}\label{ss:mv:constrained}

Impose deterministic capacity bounds on each share:
\begin{equation}\label{eq:mv:caps}
L_i\ \le\ X_i\ \le\ U_i\qquad\text{a.s.,}\qquad i\in[n],
\end{equation}
with $L_i\in[-\infty,+\infty)$, $U_i\in(-\infty,+\infty]$, $L_i<U_i$,
and $\sum_iL_i\le S\le\sum_iU_i$ a.s.\ in the extended sense. Let
\[
\mathbb{A}_n^{\mathrm{cap}}(S)\ :=\ \left\{\bX\in\mathbb{A}_n(S)\cap(L^2)^n:\ L_i\le X_i\le U_i\ \text{a.s.}\ \forall i\right\}.
\]
The constrained mean-variance problem is
\begin{equation}\label{eq:mv:constrained}
\inf_{\bX\in\mathbb{A}_n^{\mathrm{cap}}(S)}\sum_{i=1}^n\rho_i(X_i).
\end{equation}

Applied componentwise, Theorem~\ref{thm:solid_families}(a) shows that
$\mathbb{A}_n^{\mathrm{cap}}(S)$ is componentwise convex-order solid in
the sense of Definition~\ref{def:cc:solid}, and convex order preserves
$L^2$ membership, so the comonotonic improvement stays in $(L^2)^n$.
Theorem~\ref{thm:cc:constrained} therefore yields the reduction
\[
\inf_{\bX\in\mathbb{A}_n^{\mathrm{cap}}(S)}\sum_i\rho_i(X_i)
\ =\ \inf_{\bX\in\mathbb{A}_n^{\mathrm{cap}}(S)\cap\mathbb{A}_n^{+}(S)}\sum_i\rho_i(X_i),
\]
so we may restrict attention to comonotonic allocations of the form
$X_i=f_i(S)$ with $f_i:\R\to[L_i,U_i]$ nondecreasing and
$\sum_if_i(s)=s$ for $F_S$-a.e.\ $s$. Let
$\mathcal{S}:=\mathrm{supp}(F_S)$ denote the topological support of
$S$, that is, the smallest closed set $K\subseteq\R$ with
$\p(S\in K)=1$. Since the optimum has the form $X_i^\star=f_i^\star(S)$,
the functions $f_i^\star$ are pinned down only on $\mathcal{S}$;
statements below about the shape of $s\mapsto f_i^\star(s)$ therefore
refer to $s\in\mathcal{S}$ and hold up to $F_S$-null sets.

The following theorem identifies the form of the constrained optimum.

\begin{theorem}\label{thm:mv:form}

Any solution $\bX^{\star}$ to problem \eqref{eq:mv:constrained} has the form
\begin{equation}\label{eq:mv:form}
X_i^{\star}\ =\ \Pi_i\!\left(c_i^{\star}+\delta_i^{-1}\,\eta^{\star}(S)\right)\quad\text{a.s.},\qquad
\Pi_i(x)\ :=\ \min\!\left\{U_i,\ \max\!\left\{L_i,\ x\right\}\right\},
\end{equation}
with $c_i^{\star}=\E[X_i^{\star}]$ and a nondecreasing
$\eta^{\star}:\mathcal{S}\to\R$ defined by the clearing condition
$$\sum_i\Pi_i(c_i^{\star}+\delta_i^{-1}\eta^{\star}(s))=s.$$ 
Furthermore, the map $s\mapsto f_i^{\star}(s)$ is
affine with slope
\begin{equation}\label{eq:mv:slopes}
\frac{\d f_i^{\star}}{\d s}\ =\
\begin{cases}
\displaystyle\frac{\delta_i^{-1}}{\sum_{j\in A}\delta_j^{-1}},&i\in A,\\[0.9em]
0,&i\notin A
\end{cases}
\end{equation} 
on any subinterval of $\mathcal{S}$ along which the active set
$A(s):=\{i\in[n]:L_i<f_i^{\star}(s)<U_i\}$ is constant and equal to some nonempty $A\subseteq[n]$.
\end{theorem}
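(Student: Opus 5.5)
The plan is to first reduce to comonotonic allocations, then derive the truncated-affine form from a pointwise first-order condition. Let $\bX^\star$ be any solution of \eqref{eq:mv:constrained}. The first task is to show that $\bX^\star$ itself is comonotonic; the reduction displayed before the theorem only says an optimum can be taken comonotonic. To get this, I use strict convexity. The objective $\sum_i\rho_i(X_i)=\E[S]+\sum_i\delta_i\mathrm{Var}(X_i)$ is convex on the convex set $\mathbb{A}_n^{\mathrm{cap}}(S)$. It is strictly convex modulo cash transfers, since $\mathrm{Var}$ is strictly convex on centered variables. So the centered parts $X_i^\star-\E[X_i^\star]$ are uniquely determined.

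By Theorem~\ref{thm:cc:constrained}(i) applied to $\bX^\star$, there is a comonotonic $\bar\bX\in\mathbb{A}_n^{\mathrm{cap}}(S)$ with $\bar X_i\cx X^\star_i$, hence with the same means and no larger variances. Optimality of $\bX^\star$ forces equal variances. Strict convexity, applied via the midpoint $(\bX^\star+\bar\bX)/2$, then forces $\bar X_i=X_i^\star$ a.s. Hence $X_i^\star=f_i^\star(S)$ with $f_i^\star$ nondecreasing, valued in $[L_i,U_i]$, and summing to $s$.

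Next I derive pointwise optimality conditions via perturbations. Fix $c_i^\star=\E[X_i^\star]$. For $i\ne j$ and a bounded $\sigma(S)$-measurable $H\ge0$ supported where $f_i^\star(S)<U_i$ and $f_j^\star(S)>L_j$, the perturbation $X_i^\star+\varepsilon H$, $X_j^\star-\varepsilon H$ stays feasible for small $\varepsilon>0$. Truncating $H$ to sets where the slack exceeds $1/k$ makes this exact. The first-order condition is
\[
\E\bigl[H\bigl(\delta_i(X_i^\star-c_i^\star)-\delta_j(X_j^\star-c_j^\star)\bigr)\bigr]\ge0,
\]
using $\frac{\d}{\d\varepsilon}\mathrm{Var}(X+\varepsilon H)|_0=2\mathrm{Cov}(X,H)=2\E[H(X-\E X)]$.

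Define $g_i(s):=\delta_i(f_i^\star(s)-c_i^\star)$. The condition then yields the following for $F_S$-a.e.\ $s$: whenever $f_i^\star(s)<U_i$ and $f_j^\star(s)>L_j$, we have $g_i(s)\ge g_j(s)$. Set $\eta^\star(s):=g_i(s)$ for any active $i$. If no agent is active, take any value in the interval cut out by the inequalities. The inequalities show that all active agents share the same $g$. They also show that agents at $U_i$ have $g_i\le\eta^\star$ and agents at $L_i$ have $g_i\ge\eta^\star$. This is exactly $f_i^\star(s)=\Pi_i(c_i^\star+\delta_i^{-1}\eta^\star(s))$, and clearing is automatic.

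Monotonicity of $\eta^\star$ follows because $x\mapsto\sum_i\Pi_i(c_i^\star+\delta_i^{-1}x)$ is nondecreasing and strictly increasing wherever some agent is active. Its generalized inverse evaluated at $s$ is therefore nondecreasing; pick that version of $\eta^\star$ on $\mathcal{S}$.

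The slope formula \eqref{eq:mv:slopes} follows on an interval where $A(s)\equiv A\neq\emptyset$. There the inactive $f_i^\star$ are constant at $L_i$ or $U_i$. Clearing gives
\[
\sum_{i\in A}\bigl(c_i^\star+\delta_i^{-1}\eta^\star(s)\bigr)=s-\text{const},
\]
so $\eta^\star$ is affine with slope $\bigl(\sum_{j\in A}\delta_j^{-1}\bigr)^{-1}$, and $f_i^\star$ has slope $\delta_i^{-1}$ times that.

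The main obstacle I expect is the rigorous passage from the integrated variational inequality to the a.e.\ pointwise statement. This needs care with the sets $\{f_i^\star<U_i-1/k,\ f_j^\star>L_j+1/k\}$, with choosing $H$ as indicators of arbitrary Borel sets in $s$, and with null sets when $\mathcal{S}$ has gaps or atoms. I would handle it by a countable union over $k$, rational pairs, and pairs $(i,j)$. The uniqueness step is the other delicate point, because the constraint set depends on exact values and not just centered parts. I would rely on the midpoint argument applied directly to $\bX^\star$ and $\bar\bX$ with equal means, so that no cash shift is needed.
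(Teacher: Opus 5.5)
Your proposal is correct, and it differs from the paper's proof in two places. First, the paper applies Theorem~\ref{thm:cc:constrained}(i) to pass from a given optimizer to its comonotonic improvement, and then simply \emph{fixes} that comonotonic optimizer. As written, it therefore proves the truncated-affine form only for the comonotonic representative; its Remark indeed speaks of ``every comonotonic optimum''. The statement, however, asserts the form for an arbitrary solution. Your midpoint identity $\sum_i\delta_i\mathrm{Var}\bigl(\tfrac12(X_i^\star+\bar X_i)\bigr)=\sum_i\delta_i\mathrm{Var}(X_i^\star)-\tfrac14\sum_i\delta_i\mathrm{Var}(X_i^\star-\bar X_i)$ holds because $\bar{\bX}$ is also optimal. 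Together with equality of means under $\cx$, it gives $\bar{\bX}=\bX^\star$ and closes exactly that gap.

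Second, for the pointwise characterization the paper argues by statewise replacement. Since $\mathrm{Var}(Y)\le\E[(Y-c)^2]$ with equality at $c=\E[Y]$, the vector $f^\star(s)$ must solve the box-constrained quadratic program $(Q_s)$ for $F_S$-a.e.\ $s$. The projection formula is then read off from finite-dimensional KKT theory, which also supplies uniqueness of the statewise minimizer. You instead derive the complementarity conditions directly in $L^2$ from admissible pairwise transfers $\pm\varepsilon H$, and you build $\eta^\star$ by hand, including the case with no active agent. The paper's route is shorter and delegates the case analysis to standard convex duality. Yours avoids constructing a replacement allocation, at the cost of the countable null-set bookkeeping you describe. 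If you run your perturbation argument with arbitrary bounded $\mathcal{F}$-measurable $H$, the pointwise conditions already force $X^\star(\omega)$ to be the unique statewise minimizer at $S(\omega)$, so your comonotonicity step becomes optional.

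One small repair is needed. The generalized inverse $\inf\{\eta:\sum_i\Pi_i(c_i^\star+\delta_i^{-1}\eta)\ge s\}$ equals $-\infty$ at $s=\sum_iL_i$ when all $L_i$ are finite and this point lies in $\mathcal{S}$. Choose instead, say, the point of the closed solution interval nearest to $0$. This selector is finite and still nondecreasing, because for $s<s'$ every solution at $s$ lies below every solution at $s'$.
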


% \begin{proof}
% By Theorem~\ref{thm:cc:constrained}(i) an optimizer can be chosen
% comonotonic. Setting $c_i^\star=\E[X_i^\star]$ and using
% $\mathrm{Var}(X_i)\le\E[(X_i-c_i^\star)^2]$ with equality at the
% optimizer, the problem reduces statewise (conditional on $S=s$) to
% the strictly convex quadratic program in
% Remark~\ref{rem:mv:technical}. Dualizing the clearing constraint with
% multiplier $2\eta$ decouples the box minimization and gives the
% truncated-affine form \eqref{eq:mv:form}. The slope formula
% \eqref{eq:mv:slopes} follows because on any interval where the
% active set $A$ is constant, the clearing identity
% $\sum_if_i^\star(s)=s$ forces $\eta^\star$ affine in~$s$ with the
% stated slope. When all bounds are infinite, $\Pi_i$ is the identity
% and the formula reduces to the unconstrained expression.
% \end{proof}

\begin{proof}
By Theorem~\ref{thm:solid_families}(a), applied componentwise, the
constraint set $\mathbb{A}_n^{\mathrm{cap}}(S)$ is componentwise
convex-order solid in the sense of Definition~\ref{def:cc:solid}.
Theorem~\ref{thm:cc:constrained}(i) then produces a feasible
allocation comonotonic with $S$ that componentwise dominates any
given optimizer in convex order; since each $\rho_i$ is consistent
with convex order, this comonotonic allocation is itself optimal. Fix such an optimizer, write it
as $X_i^{\star}=f_i^{\star}(S)$, and set
$c_i^{\star}:=\E[X_i^{\star}]$. Since $\sum_i\E[X_i]=\E[S]$ is fixed
across $\mathbb{A}_n^{\mathrm{cap}}(S)$, the expectation part of the
objective in \eqref{eq:mv:constrained} is constant, and only the
variance term remains.

We first show that, for $F_S$-a.e.\ $s\in\mathcal{S}$, the vector
$f^{\star}(s)=(f_1^{\star}(s),\ldots,f_n^{\star}(s))$ solves the
finite-dimensional statewise program
\begin{equation}\label{eq:mv:inner}
\min_{x\in\R^n}\sum_{i=1}^n\delta_i(x_i-c_i^{\star})^2
\quad\text{s.t.}\quad\sum_{i=1}^n x_i=s,\ L_i\le x_i\le U_i.
\tag*{($Q_s$)}
\end{equation}
If this failed on a set of $s$-values with positive $F_S$-mass,
replacing $f^{\star}(s)$ on that set by the unique minimizer of
$(Q_s)$ would produce another feasible allocation
$Y_i=g_i(S)\in\mathbb{A}_n^{\mathrm{cap}}(S)$ with
\[
\sum_i\rho_i(Y_i)
\ =\ \E[S]+\sum_i\delta_i\,\mathrm{Var}(Y_i)
\ \le\ \E[S]+\sum_i\delta_i\,\E[(Y_i-c_i^{\star})^2],
\]
whereas, using
$\mathrm{Var}(X_i^{\star})=\E[(X_i^{\star}-c_i^{\star})^2]$ at the
optimizer,
\[
\sum_i\rho_i(X_i^{\star})
\ =\ \E[S]+\sum_i\delta_i\,\E[(X_i^{\star}-c_i^{\star})^2].
\]
A statewise improvement on a positive-mass set makes the right-hand
side strictly smaller, contradicting optimality of $\bX^{\star}$.
Hence $f^{\star}(s)$ solves $(Q_s)$ for $F_S$-a.e.\ $s$.

It remains to solve $(Q_s)$. This is a strictly convex quadratic
program with affine and box constraints, so the KKT conditions are
necessary and sufficient. With Lagrangian
\[
\mathcal{L}(x,\eta,\lambda,\mu)
\ =\ \sum_{i=1}^n\delta_i(x_i-c_i^{\star})^2
-2\eta\!\left(\sum_{i=1}^n x_i-s\right)
+\sum_{i=1}^n\lambda_i(L_i-x_i)
+\sum_{i=1}^n\mu_i(x_i-U_i),
\]
the KKT system
\[
\sum_i x_i=s,\qquad L_i\le x_i\le U_i,\qquad
2\delta_i(x_i-c_i^{\star})-2\eta-\lambda_i+\mu_i=0,
\]
\[
\lambda_i(x_i-L_i)=0,\qquad\mu_i(U_i-x_i)=0,\qquad
\lambda_i,\mu_i\ge0,
\]
is equivalent to
\[
x_i\ =\ \Pi_i\!\left(c_i^{\star}+\delta_i^{-1}\eta\right),\qquad
\Pi_i(x)=\min\{U_i,\max\{L_i,x\}\}.
\]
The clearing condition $\sum_i x_i=s$ then determines the multiplier
$\eta^{\star}(s)$ through
\[
H(\eta^{\star}(s))\ =\ s,\qquad
H(\eta)\ :=\ \sum_{i=1}^n\Pi_i\!\left(c_i^{\star}+\delta_i^{-1}\eta\right),
\]
and, since $H$ is continuous and nondecreasing, a nondecreasing
measurable selector $s\mapsto\eta^{\star}(s)$ exists. Combining,
\[
X_i^{\star}\ =\ \Pi_i\!\left(c_i^{\star}+\delta_i^{-1}\eta^{\star}(S)\right)\quad\text{a.s.,}
\]
which is the truncated-affine form \eqref{eq:mv:form}.

Finally, on any subinterval of $\mathcal{S}$ along which the active
set $A=\{i\in[n]:L_i<f_i^{\star}(s)<U_i\}$ is constant, the inactive
shares are fixed at their bounds and the clearing condition reduces
to
\[
\sum_{i\in A}\left(c_i^{\star}+\delta_i^{-1}\eta^{\star}(s)\right)
+\sum_{i\notin A}f_i^{\star}(s)\ =\ s,
\]
so $\eta^{\star}$ is affine in $s$ on that interval. Differentiating
yields
\[
\frac{\d f_i^{\star}}{\d s}
\ =\ \frac{\delta_i^{-1}}{\sum_{j\in A}\delta_j^{-1}}\ \text{for }i\in A,
\qquad
\frac{\d f_i^{\star}}{\d s}\ =\ 0\ \text{for }i\notin A,
\]
which is \eqref{eq:mv:slopes}. When $L_i=-\infty$ and $U_i=+\infty$
for every $i$, $\Pi_i$ is the identity and $A=[n]$, and the formula
reduces to the unconstrained affine allocation
\eqref{eq:mv:opt:affine}.
\end{proof}
\begin{remark}[Existence and measurability of the shadow-price selector]\label{rem:mv:technical}
Theorem~\ref{thm:mv:form} asserts that every comonotonic optimum has the
truncated-affine form \eqref{eq:mv:form} through a nondecreasing
measurable function $\eta^{\star}:\mathcal{S}\to\R$ of the
aggregate loss. Two facts must hold for this claim to make sense:
the statewise minimizer must be unique, and the multiplier that
generates it must be choosable as a Borel function of $s$, even though
the clearing condition does not pin $\eta$ down uniquely when several
$\eta$-values give the same total. We record both here.

For fixed $s\in\mathcal{S}$, the statewise program
\ref{eq:mv:inner} of the proof of Theorem~\ref{thm:mv:form} has a
strictly convex, coercive objective and a convex polyhedral
feasible set, so it admits a unique minimizer $\hat x(s)$. Dualizing
the adding-up constraint yields
\begin{equation}\label{eq:mv:clip}
\hat x_i(s)\ =\ \Pi_i\!\left(c_i^{\star}+\delta_i^{-1}\,\eta(s)\right)
\end{equation}
for any $\eta(s)$ satisfying $H(\eta(s))=s$, where
$H(\eta):=\sum_i\Pi_i(c_i^{\star}+\delta_i^{-1}\eta)$ is continuous and
nondecreasing. The preimage $H^{-1}(\{s\})$ is therefore a nonempty
closed interval, possibly unbounded at boundary values of $s$, and
$\hat x(s)$ is constant along it, so the non-uniqueness of $\eta(s)$
does not affect the allocation on any flat region of $H$. Any
nondecreasing measurable selector $s\mapsto\eta^{\star}(s)$ of the
multivalued inverse of $H$ therefore defines a finite, single-valued
and measurable $\eta^{\star}:\mathcal{S}\to\R$, so
$\hat X_i:=\hat x_i(S)$ is a Borel allocation comonotonic with $S$ of
the form claimed by Theorem~\ref{thm:mv:form}.
\end{remark}

Theorem~\ref{thm:mv:form} records three structural features. Globally,
the allocation is truncated-affine through a single nondecreasing
statewise shadow price $\eta^{\star}(S)$, pinned down by the adding-up
constraint. Locally, sharing is proportional to risk tolerances among
active agents, reproducing the unconstrained slopes
$\delta_i^{-1}/\sum_{j\in A}\delta_j^{-1}$ on each region where the active set $A$ is
constant. Capped agents contribute zero marginal sharing on that
region, and the remaining active agents re-share incremental losses
with renormalized slopes; slope discontinuities in
$s\mapsto f_i^{\star}(s)$ occur at levels where an agent transitions
between interior and boundary.

\subsection{Two-agent example}

Consider $n=2$ agents with risk-aversion parameters $\delta_1,\delta_2>0$.
Agent 1 faces a hard upper cap $U_1=C>0$ and a lower bound
$L_1=0$; agent 2 is uncapped in the sense that $L_2=-\infty$ and
$U_2=+\infty$. Assume $S\ge 0$ a.s.
The feasible set is
\[
\mathbb{A}_2^{\mathrm{cap}}(S)\ =\ \left\{(X_1,X_2):\ X_1+X_2=S,\ 0\le X_1\le C\right\}.
\]
Write $a:=\delta_1^{-1}/(\delta_1^{-1}+\delta_2^{-1})\in(0,1)$ for the unconstrained quota
share of agent 1, and assume the cap binds on a nontrivial set of
states, that is, $\p(aS>C)>0$, equivalently $\esssup S>C/a$.

\begin{corollary}\label{cor:mv:two}
The constrained mean-variance optima in the two-agent pool above take
the form
\begin{equation}\label{eq:mv:two:agent}
X_1^{\star}\ =\ \min\!\left\{C,\ \max\!\left\{0,\ aS+\beta^{\star}\right\}\right\},\qquad
X_2^{\star}\ =\ S-X_1^{\star}\quad\text{a.s.,}
\end{equation}
where $\beta^{\star}\in\R$ is any solution of the fixed-point condition
\begin{equation}\label{eq:mv:two:fp}
\beta^{\star}\ =\ \E\!\left[\min\!\left\{C,\ \max\!\left\{0,\ aS+\beta^{\star}\right\}\right\}\right]-a\,\E[S].
\end{equation}
The fixed-point equation \eqref{eq:mv:two:fp} admits at least one
solution $\beta^{\star}\in\R$, and any such $\beta^{\star}$ yields
an optimum of the form \eqref{eq:mv:two:agent}; the solution set
is a closed interval $[\beta_-,\beta_+]$ of deterministic balanced
cash transfers.
\end{corollary}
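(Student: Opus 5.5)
We specialize Theorem~\ref{thm:mv:form} to $n=2$ with $L_1=0$, $U_1=C$, $L_2=-\infty$, $U_2=+\infty$. Since $\Pi_2$ is the identity, any optimum has $X_2^\star=c_2^\star+\delta_2^{-1}\eta^\star(S)$ and $X_1^\star=\Pi_1(c_1^\star+\delta_1^{-1}\eta^\star(S))$. The clearing identity gives $\Pi_1(c_1^\star+\delta_1^{-1}\eta)+c_2^\star+\delta_2^{-1}\eta=s$. The plan is to eliminate $\eta$ and read off $X_1^\star$ directly as a clipped affine function of $S$. To do this, set $y:=c_1^\star+\delta_1^{-1}\eta$, so that $\eta=\delta_1(y-c_1^\star)$, and let $r:=\delta_1/\delta_2$; note $a=1/(1+r)$. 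The clearing equation becomes $\Pi_1(y)+r y=s+rc_1^\star-c_2^\star$. We then show by cases that $\Pi_1(y)=\Pi_1(a(s+rc_1^\star-c_2^\star))$:
\begin{itemize}
\item if $0\le y\le C$, then $(1+r)y=s+rc_1^\star-c_2^\star$;
\item if $y>C$, then $ry=\cdots-C$, and $a(\cdots)=a(C+ry)>aC+arC=C$, so the clip still equals $C$;
\item the case $y<0$ is symmetric.
\end{itemize}
Hence $X_1^\star=\min\{C,\max\{0,aS+\beta\}\}$ with $\beta:=a(rc_1^\star-c_2^\star)$, a deterministic constant.

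It remains to identify $\beta$ through the fixed point. Using $c_1^\star+c_2^\star=\E[S]$, we get $\beta=a((1+r)c_1^\star-\E[S])=c_1^\star-a\E[S]$. Since $c_1^\star=\E[X_1^\star]=\E[\min\{C,\max\{0,aS+\beta\}\}]$, $\beta$ satisfies \eqref{eq:mv:two:fp}. This proves that every optimum has the form \eqref{eq:mv:two:agent} with $\beta^\star$ solving \eqref{eq:mv:two:fp}.

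For existence of a solution, define $G(\beta):=\beta-\E[\Pi_1(aS+\beta)]+a\E[S]$. Because $\Pi_1$ is bounded, $G$ is continuous and tends to $\pm\infty$ as $\beta\to\pm\infty$. Moreover, $G$ is nondecreasing, since $\beta\mapsto\E[\Pi_1(aS+\beta)]$ is $1$-Lipschitz and nondecreasing. The intermediate value theorem then gives a root. The zero set of a continuous nondecreasing function is a closed interval, and it is bounded because $G\to\pm\infty$; this yields $[\beta_-,\beta_+]$.

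The remaining direction, that every root gives an optimum, is the main obstacle, because Theorem~\ref{thm:mv:form} is only a necessary condition. Existence of a minimizer would give one root that is optimal, but other roots need separate treatment. I would instead show sufficiency directly. For a root $\beta$, set $c_1:=\E[\Pi_1(aS+\beta)]$ and $c_2:=\E[S]-c_1$. Then the candidate $x(s)$ solves the statewise program $(Q_s)$ with centers $(c_1,c_2)$, since the KKT conditions hold with $\eta=\delta_1(aS+\beta-c_1)$ by reversing the case computation above. Now take any feasible $\bY$. We have
\[
\sum_i\delta_i\mathrm{Var}(Y_i)=\min_{m_1+m_2=\E S}\sum_i\delta_i\E[(Y_i-m_i)^2],
\]
which is at most $\sum_i\delta_i\E[(Y_i-c_i)^2]$. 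This in turn is at least $\sum_i\delta_i\E[(x_i(S)-c_i)^2]$ by statewise optimality, and the latter equals $\sum_i\delta_i\mathrm{Var}(x_i(S))$ because the means match by the fixed point. The inequality direction here is wrong as written, so I would actually use joint convexity: minimize the jointly convex function $(\bY,m)\mapsto\sum_i\delta_i\E[(Y_i-m_i)^2]$ over $\bY$ feasible and $m$ with $m_1+m_2=\E[S]$. The pair $(x(S),c)$ satisfies first-order conditions in both blocks, statewise KKT for $\bY$ and $m_i=\E[x_i]$ for $m$, hence it is a global minimizer. Since $\min_m$ of this function equals the variance objective, the candidate is optimal.

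Finally, $\bY$ lies in $(L^2)^2$ because $S\in L^2$ and $X_1^\star$ is bounded.
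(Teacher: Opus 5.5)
Your proof is correct in substance, and it takes a different and more complete route than the paper's. The paper reads the slope $a$ off \eqref{eq:mv:slopes} on states where agent~1 is interior, pins $X_1^\star$ at $0$ or $C$ elsewhere, and glues the pieces by continuity of $f_1^\star$. It then obtains a root of \eqref{eq:mv:two:fp} from the presumed existence of an optimum. It gets the interval structure from the same monotonicity of $\beta\mapsto h(\beta)-\beta$ that you use. It gives no argument that an arbitrary root yields an optimum.

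Your steps differ at each point. Your elimination of $\eta^\star$ from the clearing identity is pointwise in $s$, so it needs neither continuity nor intervals in $\mathcal{S}$. It therefore also covers discrete or gapped supports, where the slope formula is vacuous. Your intermediate-value argument produces a root without presupposing an optimum, and through $G\to\pm\infty$ it shows that $\beta_\pm$ are finite. Your joint-convexity step supplies the sufficiency clause. Taken together, your steps actually prove that an optimum exists.

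Two repairs are needed. First, the multiplier $\eta=\delta_1(as+\beta-c_1)$ is right only where $0\le as+\beta\le C$. On $\{as+\beta>C\}$ it gives $c_2+\delta_2^{-1}\eta=(1-a)s-c_1+a\E[S]$, which equals $x_2=s-C$ only when $as+\beta=C$. Instead, let $y_s$ be the unique solution of $\Pi_1(y)+ry=s+rc_1-c_2$ (the left side is continuous and strictly increasing), and set $\eta(s):=\delta_1(y_s-c_1)$. Your case analysis gives $\Pi_1(y_s)=\Pi_1(as+\beta)$, and then $c_2+\delta_2^{-1}\eta(s)=s-\Pi_1(as+\beta)$, so the candidate solves $(Q_s)$. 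Equivalently, $\eta(s)=\delta_2\bigl(s-\Pi_1(as+\beta)-c_2\bigr)$, read off agent~2, who is never at a bound. Note that $y_s\neq as+\beta$ off the interior region.

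Second, blockwise minimality of a jointly convex function does not give joint minimality in general. It does here because $F$ is a continuous quadratic on $(L^2)^2\times\R^2$ and the constraint set is a product, but you should make this explicit. For feasible $\bY$, write $\sum_i\delta_i\mathrm{Var}(Y_i)=F(\bY,\E[\bY])\ge F(x(S),c)+\langle\nabla F(x(S),c),(\bY-x(S),\E[\bY]-c)\rangle$. The $m$-gradient vanishes since $c_i=\E[x_i(S)]$. The $\bY$-term $\sum_i2\delta_i\E[(x_i(S)-c_i)(Y_i-x_i(S))]$ is nonnegative by integrating the optimality condition of $(Q_s)$.

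Finally, drop the abandoned inequality chain. Also, the object shown to lie in $(L^2)^2$ at the end should be the candidate $\bigl(\Pi_1(aS+\beta),\,S-\Pi_1(aS+\beta)\bigr)$, not $\bY$.
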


\begin{proof}
Agent~2 is uncapped, so $\Pi_2$ is the identity in
Theorem~\ref{thm:mv:form}. On the states where the active set
equals $\{1,2\}$, the slope formula \eqref{eq:mv:slopes} gives
$\d X_1^{\star}/\d s=a$; on the cap regions $X_1^{\star}$ is pinned
at $0$ or $C$. Continuity of $s\mapsto f_1^{\star}(s)$ forces
\[
X_1^{\star}\ =\ \Pi(aS+\beta^{\star}),\qquad
\Pi(y):=\min\{C,\max\{0,y\}\},
\]
for some $\beta^{\star}\in\R$ identified from the interior identity
$X_1^{\star}=aS+\beta^{\star}$ as $\beta^{\star}=c_1^{\star}-a\E[S]$,
using $c_1^{\star}+c_2^{\star}=\E[S]$. Combining with
$c_1^{\star}=\E[X_1^{\star}]=\E[\Pi(aS+\beta^{\star})]$ yields the
fixed-point equation \eqref{eq:mv:two:fp}. Existence of
$\beta^{\star}$ follows from existence of the optimum.

Let $h(\beta):=\E[\Pi(aS+\beta)]$. Since $\Pi$ is nondecreasing and
$1$-Lipschitz, so is $h$, and $h(\beta)-\beta-a\E[S]$ is continuous
and nonincreasing, so the solution set of \eqref{eq:mv:two:fp} is
a closed interval $[\beta_-,\beta_+]$.
\end{proof}

The allocation \eqref{eq:mv:two:agent} is the classical
quota-share-with-limit rule up to a deterministic transfer. 
Below the attachment $s=(C-\beta^{\star})/a$
the two agents pool loss in proportion to their tolerances, and above
the attachment the capped agent stops absorbing loss while the
uncapped agent takes the full excess. When the lower boundary
$\{X_1^{\star}=0\}$ does not bind (for instance whenever
$\beta^{\star}\ge 0$ and $S\ge 0$ a.s.), the single breakpoint at
$s=(C-\beta^{\star})/a$ is the only slope change and corresponds to
agent~$1$ transitioning from interior to upper cap; below it the
slopes are $(a,1-a)$, above it they are $(0,1)$. The formula
reproduces the limited-quota-share treaty documented in the standard
reinsurance reference \citet{albrecher2017reinsurance}, up to the
transfer $\beta^{\star}$ that the paper's convention usually absorbs
into a premium. The comonotonic reduction in
Theorem~\ref{thm:cc:constrained}, instantiated by
Theorem~\ref{thm:solid_families}(a) componentwise
(Corollary~\ref{cor:caps} covers the symmetric $L_i=0$ case), is
what licenses restricting to this shape in the first place.

\subsection{Four-agent illustration}\label{ss:mv:numerical}

We display a four-agent allocation with deterministic
capacity caps to illustrate the successive-saturation pattern that
Theorem~\ref{thm:mv:form} produces. The primitives are
mean-variance preferences
$\rho_i(X)=\E[X]+\delta_i\mathrm{Var}(X)$ with
$\boldsymbol{\delta}=(2,3,5,6)$, unconstrained quota-share slopes
$\boldsymbol{a}=(5/12,5/18,1/6,5/36)$ from
Proposition~\ref{prop:mv:unconstrained}, and upper caps
$\boldsymbol{U}=(5,8,3,+\infty)$ with $L_i=-\infty$. Figure~\ref{fig:mv:numerical}
shows the zero-intercept representative of the unconstrained family
clipped by the caps, that is, $\widetilde X_i(s)$ starts from $a_is$
and saturates at $U_i$ as $s$ grows, with successive breakpoints
at $s_1=12$ (agent 1 saturates), $s_2=31/2$ (agent 3 saturates), and
$s_3=20$ (agent 2 saturates); between breakpoints the remaining
uncapped agents re-share with renormalized slopes. This is not the
literal optimizer of Theorem~\ref{thm:mv:form}, whose intercepts
$c_i^{\star}=\E[X_i^{\star}]$ are determined by fixed-point
conditions analogous to \eqref{eq:mv:two:fp}; the same
saturation pattern persists under the endogenous intercepts, with
breakpoints shifted accordingly.

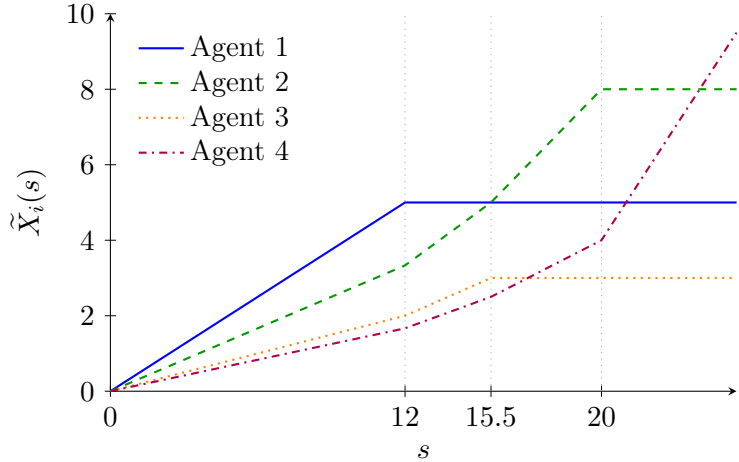
\begin{figure}[ht]
\centering
\begin{tikzpicture}
\begin{axis}[
  width=0.6\linewidth, height=0.4\linewidth,
  xlabel={$s$},
  ylabel={$\widetilde{X}_i(s)$},
  xmin=0, xmax=25.5,
  ymin=0, ymax=10,
  xtick={0,12,15.5,20},
  ytick={0,2,4,6,8,10},
  legend pos=north west,
  legend cell align=left,
  legend style={draw=none, fill=none},
  axis lines=left,
  grid=none,
  tick style={black},
  enlargelimits=false,
  clip=true,
]
% Breakpoint markers
\draw[dotted, gray!70, thin] (axis cs:12,0)   -- (axis cs:12,10);
\draw[dotted, gray!70, thin] (axis cs:15.5,0) -- (axis cs:15.5,10);
\draw[dotted, gray!70, thin] (axis cs:20,0)   -- (axis cs:20,10);

% Agent 1: (0,0) -> (12,5) -> (25.5,5)
\addplot[thick, color=blue, solid]
  coordinates {(0,0) (12,5) (25.5,5)};
\addlegendentry{Agent 1}

% Agent 2: (0,0) -> (12, 10/3) -> (15.5,5) -> (20,8) -> (25.5,8)
\addplot[thick, color=green!60!black, dashed]
  coordinates {(0,0) (12,3.3333) (15.5,5) (20,8) (25.5,8)};
\addlegendentry{Agent 2}

% Agent 3: (0,0) -> (12,2) -> (15.5,3) -> (25.5,3)
\addplot[thick, color=orange, dotted]
  coordinates {(0,0) (12,2) (15.5,3) (25.5,3)};
\addlegendentry{Agent 3}

% Agent 4: (0,0) -> (12, 5/3) -> (15.5, 5/2) -> (20,4) -> (25.5, 9.5)
\addplot[thick, color=red!70!blue, dashdotted]
  coordinates {(0,0) (12,1.6667) (15.5,2.5) (20,4) (25.5,9.5)};
\addlegendentry{Agent 4}

\end{axis}
\end{tikzpicture}
\caption{Capped quota-share allocations $\widetilde{X}_i(s)$ with
upper caps $\boldsymbol{U}=(5,8,3,+\infty)$ and zero intercepts.
Dotted vertical lines mark the breakpoints at which agents 1, 3, and
2 successively saturate their caps.}
\label{fig:mv:numerical}
\end{figure}

\subsection{Two-agent Value-at-Risk ceilings}\label{ss:mv:var}

Mean-variance risk sharing is a classical setting in which every
agent is variance-averse and the unconstrained optimum is an affine
comonotonic quota share. Even here, a $\VaR$ ceiling is enough to
break comonotonicity: the failure of convex-order solidity for $\VaR$
(Section~\ref{ss:counter:var}) translates directly into a
non-comonotonic optimum under continuous losses. To illustrate, take
two variance-averse agents with loss evaluations
$\rho_i(X)=\E[X]+\delta_i\mathrm{Var}(X)$ and parameters
$(\delta_1,\delta_2)=(0.01,1)$, so that agent~1 is much less
variance-averse than agent~2. Let $\zeta_1,\zeta_2\stackrel{\mathrm{iid}}{\sim}\mathrm{Exp}(1)$, so $S=\zeta_1+\zeta_2$ is gamma-distributed with shape~$2$ and rate~$1$,
and impose the constraint
\[
X_1+X_2=S,\qquad X_i\ge 0,\qquad \VaR_{0.95}(X_i)\le 3,\quad i=1,2,
\]
under which autarky is feasible.

Set $\lambda:=\delta_1/(\delta_1+\delta_2)=1/101$, the mean-constraint
center $m^\star\approx 0.6337$,
$a:=m^\star/(1-\lambda)\approx 0.6400$,
$r:=(3+m^\star)/(1-\lambda)\approx 3.6700$, and
$q:=\VaR_{0.95}(S)\approx 4.7439$. The constrained optimum is
$X_2^\star=f(S)$, $X_1^\star=S-f(S)$, with
\[
f(s)\ =\
\begin{cases}
s, & 0\le s\le a,\\
m^\star+\lambda s, & a<s\le r,\\
s-3, & r<s\le q,\\
m^\star+\lambda s, & s>q.
\end{cases}
\]
The map $f$ is discontinuous at $q$: $f(q-)=q-3\approx 1.74$ while
$f(q+)\approx 0.68$, so agent~2's allocation drops as $S$ crosses $q$.
The constrained objective value is
$\rho_1(X_1^\star)+\rho_2(X_2^\star)\approx 2.0517$, whereas restricting
to comonotonic allocations gives $\approx 2.0972$, confirming that the
constrained infimum is not attained within the comonotonic class. The
unconstrained optimum attains $\approx 2.0198$ and autarky (feasible,
since $\VaR_{0.95}(\zeta_i)=-\log(0.05)<3$) attains $3.01$, so the
four objective values rank as $2.0198<2.0517<2.0972<3.01$.

\begin{figure}[ht]
\centering
\begin{tikzpicture}
\begin{axis}[
  width=0.6\linewidth, height=0.4\linewidth,
  xlabel={$s$}, ylabel={$X_i^\star(s)$},
  xmin=0, xmax=7.43,
  ymin=-0.3, ymax=6.5,
  xtick={0,0.64,3.67,4.74},
  xticklabels={$0$,$a$,$r$,$q$},
  ytick={0,1,2,3,4,5,6},
  legend pos=north west,
  legend cell align=left,
  legend style={draw=none, fill=none},
  axis lines=left, grid=none, tick style={black}, enlargelimits=false, clip=true,
]
\draw[dotted, gray!70, thin] (axis cs:0.64,-0.3) -- (axis cs:0.64,6.5);
\draw[dotted, gray!70, thin] (axis cs:3.67,-0.3) -- (axis cs:3.67,6.5);
\draw[dotted, gray!70, thin] (axis cs:4.7439,-0.3) -- (axis cs:4.7439,6.5);
\draw[dotted, gray!70, thin] (axis cs:0,3) -- (axis cs:7.43,3);
% X_1^*: left branch 0 on [0,a]; affine a->r reaching 3; flat at 3 on [r,q]
\addplot[thick, color=blue, solid]
  coordinates {(0,0) (0.64002,0) (3.67002,3) (4.7439,3)};
\addlegendentry{Agent 1}
\addplot[thick, color=blue, solid, forget plot]
  coordinates {(4.7439,4.0633) (7.43,6.7227)};
% X_2^*: left branch s on [0,a]; nearly flat a->r; s-3 on [r,q]
\addplot[thick, color=green!60!black, dashed]
  coordinates {(0,0) (0.64002,0.64002) (3.67002,0.67002) (4.7439,1.7439)};
\addlegendentry{Agent 2}
\addplot[thick, color=green!60!black, dashed, forget plot]
  coordinates {(4.7439,0.6807) (7.43,0.7073)};
% Jump markers at q: filled = value attained at q (left branch), hollow = one-sided limit from the right
\addplot[only marks, mark=*, mark size=1.6, color=blue, forget plot]
  coordinates {(4.7439,3)};
\addplot[only marks, mark=o, mark size=1.6, color=blue, forget plot]
  coordinates {(4.7439,4.0633)};
\addplot[only marks, mark=*, mark size=1.6, color=green!60!black, forget plot]
  coordinates {(4.7439,1.7439)};
\addplot[only marks, mark=o, mark size=1.6, color=green!60!black, forget plot]
  coordinates {(4.7439,0.6807)};
\end{axis}
\end{tikzpicture}
\caption{Constrained mean-variance allocation under the $\VaR_{0.95}$
ceiling. At $q$, agent~2's allocation drops and agent~1's jumps up by
the same amount, so neither is comonotonic with $S$ globally.}
\label{fig:mv:var}
\end{figure}
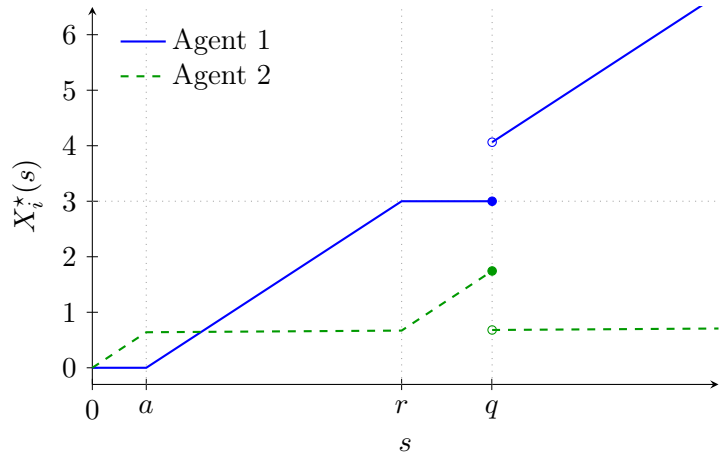

The four regimes describe how the bounds activate in turn. On $[0,a]$
the nonnegativity floor $X_1\ge 0$ binds and agent~2 absorbs the
entire loss. On $(a,r]$ both agents are interior and share loss
affinely at the unconstrained slopes $(1-\lambda,\lambda)$, with the
less-variance-averse agent~1 taking the lion's share of marginal loss.
On $(r,q]$ agent~1 is pinned at the Value-at-Risk ceiling $X_1=3$, so
agent~2 absorbs every additional unit of loss. At $s=q$ the ceiling
stops binding: since $\VaR_{0.95}$ is insensitive to the upper $5\%$
tail, the planner may transfer the accumulated cap excess back to the
less-averse agent~1 on $\{S>q\}$, and does so to restore the affine
proportional interior slopes.

The constrained allocation behaves differently across the state space:
within each of the four regimes, both $X_1^\star$ and $X_2^\star$ are
nondecreasing in $S$, so the pair is comonotonic locally, but across
the jump at $q$ the two allocations move in opposite directions, with
$X_1^\star$ rising while $X_2^\star$ falls, so pairs of states
straddling $q$ exhibit counter-monotonic behaviour. Note that here
$X_i^\star=f_i(S)$ is $\sigma(S)$-measurable, in contrast with
Example~\ref{ex:counter:var} where the optimum splits the level set
$\{S=2\}$ across $A_{1a}$ and $A_{1b}$ and so fails to be
$\sigma(S)$-measurable; the atomless exponential marginals here rule
out that mechanism, and non-comonotonicity instead shows up as the
coexistence of comonotonic pieces and counter-monotonic cross-jump
pairs. Because $\VaR_{0.95}$ does not see the upper $5\%$ tail, the
planner can still shift loss back to the less-averse agent beyond
$q$, producing a feasible allocation that strictly improves on every
comonotonic one.

%% =========================================================
\section{Conclusion}\label{sec:conclusion}

The classical unconstrained comonotonic reduction of
\citet{borch1962equilibrium}, \citet{landsberger1994comonotonic}, and
\citet{ludkovski2008comonotonicity} makes optimal risk sharing under
convex-order-consistent preferences tractable when the feasible set
is the full clearing simplex. We extend the reduction to constrained
problems. Theorem~\ref{thm:cc:constrained} shows that the comonotonic
improvement theorem survives whenever the feasible set is
componentwise convex-order solid, and
Theorem~\ref{thm:solid_families} identifies the operational
constraint families that qualify, together with their stability under
finite intersection.

Not every constraint is admissible. Sections~\ref{ss:counter:var}
and~\ref{ss:mv:var} show that $\VaR$ ceilings, which Solvency~II
imposes on insurers, escape the reduction even when every agent is
risk-averse. Risk aversion alone does not guarantee comonotonic
optima: under $\VaR$ constraints, the optimal sharing rule can combine
comonotonic and counter-monotonic regions over the state space, so the
dependence structure of the allocation departs from what the unconstrained
intuition suggests.

\section*{Acknowledgements}

CBW acknowledges financial support from the Natural Sciences and 
Engineering Research Council of Canada (RGPIN-2025-06879).

\bibliographystyle{apalike}
\bibliography{references}

@book{denneberg1994nonadditive,
  title     = {Non-{A}dditive {M}easure and {I}ntegral},
  author    = {Denneberg, D.},
  year      = {1994},
  month     = may,
  publisher = {Springer Science \& Business Media},
  isbn      = {978-0-7923-2840-7}
}

@article{dhaene2002actuarial,
  author    = {Dhaene, Jan and Denuit, Michel and Goovaerts, Marc J. and Kaas, Rob and Vyncke, David},
  title     = {The concept of comonotonicity in actuarial science and finance: theory},
  journal   = {Insurance: Mathematics and Economics},
  volume    = {31},
  number    = {1},
  pages     = {3--33},
  year      = {2002},
  publisher = {Elsevier}
}

@article{bauerle2006stochastic,
  author  = {B{\"a}uerle, Nicole and M{\"u}ller, Alfred},
  title   = {Stochastic orders and risk measures: consistency and bounds},
  journal = {Insurance: Mathematics and Economics},
  volume  = {38},
  number  = {1},
  pages   = {132--148},
  year    = {2006}
}

@article{borch1962equilibrium,
  author  = {Borch, Karl},
  title   = {Equilibrium in a reinsurance market},
  journal = {Econometrica},
  volume  = {30},
  number  = {3},
  pages   = {424--444},
  year    = {1962}
}

@article{landsberger1994comonotonic,
  author  = {Landsberger, Michael and Meilijson, Isaac},
  title   = {Co-monotone allocations, {B}ickel--{L}ehmann dispersion and the {A}rrow--{P}ratt measure of risk aversion},
  journal = {Annals of Operations Research},
  volume  = {52},
  number  = {2},
  pages   = {97--106},
  year    = {1994}
}

@article{ludkovski2008comonotonicity,
  author  = {Ludkovski, Michael and R{\"u}schendorf, Ludger},
  title   = {On comonotonicity of {P}areto optimal risk sharing},
  journal = {Statistics and Probability Letters},
  volume  = {78},
  number  = {10},
  pages   = {1181--1188},
  year    = {2008}
}

@article{wang1997axiomatic,
  author  = {Wang, Shaun S. and Young, Virginia R. and Panjer, Harry H.},
  title   = {Axiomatic characterization of insurance prices},
  journal = {Insurance: Mathematics and Economics},
  volume  = {21},
  number  = {2},
  pages   = {173--183},
  year    = {1997},
  doi     = {10.1016/S0167-6687(97)00031-0}
}

@article{wang1998ordering,
  author  = {Wang, Shaun S. and Young, Virginia R.},
  title   = {Ordering risks: {E}xpected utility theory versus {Y}aari's dual theory of risk},
  journal = {Insurance: Mathematics and Economics},
  volume  = {22},
  number  = {2},
  pages   = {145--161},
  year    = {1998},
  doi     = {10.1016/S0167-6687(97)00036-X}
}

@article{jouini2008optimal,
  author  = {Jouini, Elyes and Schachermayer, Walter and Touzi, Nizar},
  title   = {Optimal risk sharing for law invariant monetary utility functions},
  journal = {Mathematical Finance},
  volume  = {18},
  number  = {2},
  pages   = {269--292},
  year    = {2008}
}

@article{embrechts2018quantilebased,
  author  = {Embrechts, Paul and Liu, Haiyan and Wang, Ruodu},
  title   = {Quantile-based risk sharing},
  journal = {Operations Research},
  volume  = {66},
  number  = {4},
  pages   = {936--949},
  year    = {2018},
  doi     = {10.1287/opre.2017.1716}
}

@book{ruschendorf2013mathematical,
  author    = {R{\"u}schendorf, Ludger},
  title     = {Mathematical Risk Analysis: Dependence, Risk Bounds, Optimal Allocations and Portfolios},
  publisher = {Springer},
  address   = {Berlin, Heidelberg},
  year      = {2013}
}

@article{barrieu2005inf,
  author  = {Barrieu, Pauline and El Karoui, Nicole},
  title   = {Inf-convolution of risk measures and optimal risk transfer},
  journal = {Finance and Stochastics},
  volume  = {9},
  number  = {2},
  pages   = {269--298},
  year    = {2005}
}

@article{filipovic2008optimal,
  author  = {Filipovi{\'c}, Damir and Svindland, Gregor},
  title   = {Optimal capital and risk allocations for law- and cash-invariant convex functions},
  journal = {Finance and Stochastics},
  volume  = {12},
  number  = {3},
  pages   = {423--439},
  year    = {2008}
}

@article{weber2018solvency,
  author  = {Weber, Stefan},
  title   = {Solvency {II}, or how to sweep the downside risk under the carpet},
  journal = {Insurance: Mathematics and Economics},
  volume  = {82},
  pages   = {191--200},
  year    = {2018}
}

@article{xia2023infconvolution,
  author  = {Xia, Zichao and Zou, Zhenfeng and Hu, Taizhong},
  title   = {Inf-convolution and optimal allocations for mixed-{V}a{R}s},
  journal = {Insurance: Mathematics and Economics},
  volume  = {108},
  pages   = {156--164},
  year    = {2023}
}

@article{lauzier2026risk,
  author  = {Lauzier, Jean-Gabriel and Lin, Liyuan and Wang, Ruodu},
  title   = {Risk sharing, measuring variability, and distortion riskmetrics},
  journal = {Mathematical Finance},
  volume  = {36},
  number  = {2},
  pages   = {330--351},
  year    = {2026},
  doi     = {10.1111/mafi.70007}
}

@article{lauzier2023pairwise,
  author  = {Lauzier, Jean-Gabriel and Lin, Liyuan and Wang, Ruodu},
  title   = {Pairwise counter-monotonicity},
  journal = {Insurance: Mathematics and Economics},
  volume  = {111},
  pages   = {279--287},
  year    = {2023},
  doi     = {10.1016/j.insmatheco.2023.05.006}
}

@article{acciaio2007optimal,
  author  = {Acciaio, Beatrice},
  title   = {Optimal risk sharing with non-monotone monetary functionals},
  journal = {Finance and Stochastics},
  volume  = {11},
  number  = {2},
  pages   = {267--289},
  year    = {2007},
  doi     = {10.1007/s00780-007-0036-6}
}

@article{cummins2008cat,
  author  = {Cummins, J. David},
  title   = {{CAT} bonds and other risk-linked securities: state of the market and recent developments},
  journal = {Risk Management and Insurance Review},
  volume  = {11},
  number  = {1},
  pages   = {23--47},
  year    = {2008}
}

@article{bernard2025risk,
  author  = {Bernard, Carole and Vanduffel, Steven},
  title   = {Risk sharing under ambiguity},
  journal = {Working paper},
  year    = {2025}
}

@book{albrecher2017reinsurance,
  author    = {Albrecher, Hansj{\"o}rg and Beirlant, Jan and Teugels, Jozef L.},
  title     = {Reinsurance: Actuarial and Statistical Aspects},
  publisher = {Wiley},
  series    = {Wiley Series in Probability and Statistics},
  year      = {2017}
}

@techreport{bcbs2019market,
  author      = {{BCBS}},
  title       = {Minimum capital requirements for market risk},
  institution = {Basel Committee on Banking Supervision, Bank for International Settlements},
  number      = {d457},
  year        = {2019}
}

@techreport{ec2009solvency,
  author      = {{European Commission}},
  title       = {Directive 2009/138/{EC} of the {European} {Parliament} and of the {Council} of 25 {November} 2009 on the taking-up and pursuit of the business of insurance and reinsurance ({Solvency II})},
  institution = {Official Journal of the European Union},
  year        = {2009}
}

@article{denuit2020investing,
  author  = {Denuit, Michel},
  title   = {Investing in your own and peers' risks: the simple analytics of {P2P} insurance},
  journal = {European Actuarial Journal},
  volume  = {10},
  number  = {2},
  pages   = {335--359},
  year    = {2020}
}

@article{boonen2020bilateral,
  author  = {Boonen, Tim J. and Ghossoub, Mario},
  title   = {Bilateral risk sharing with heterogeneous beliefs and exposure constraints},
  journal = {ASTIN Bulletin},
  volume  = {50},
  number  = {1},
  pages   = {293--323},
  year    = {2020}
}

@article{rothschild1970increasing,
  title = {Increasing Risk: {{I}}. {{A}} Definition},
  shorttitle = {Increasing Risk},
  author = {Rothschild, Michael and Stiglitz, Joseph E},
  year = 1970,
  month = sep,
  journal = {Journal of Economic Theory},
  volume = {2},
  number = {3},
  pages = {225--243},
  issn = {0022-0531},
  doi = {10.1016/0022-0531(70)90038-4}
}

@article{carlier2003pareto,
  author  = {Carlier, Guillaume and Dana, Rose-Anne},
  title   = {{P}areto efficient insurance contracts when the insurer's cost function is discontinuous},
  journal = {Economic Theory},
  volume  = {21},
  number  = {4},
  pages   = {871--893},
  year    = {2003}
}

@article{bernard2009optimal,
  author  = {Bernard, Carole and Tian, Weidong},
  title   = {Optimal Reinsurance Arrangements under Tail Risk Measures},
  journal = {Journal of Risk and Insurance},
  year    = {2009},
  volume  = {76},
  number  = {3},
  pages   = {709--725}
}

@book{mas1995microeconomic,
  title={Microeconomic Theory},
  author={Mas-Colell, Andreu and Whinston, Michael D. and Green, Jerry R.},
  year={1995},
  publisher={Oxford University Press},
  address={New York}
}

\end{document}